\newcolumntype{V}{!{\vrule width 1pt}}
\newtheorem{theorem}{Theorem}
\newtheorem{lemma}[theorem]{Lemma}
\newtheorem{definition}[theorem]{Definition}
\newtheorem{example}[theorem]{Example}
\newtheorem{remark}[theorem]{Remark}
\newenvironment{proof}{\noindent{\em Proof:}}{$\Box$~\\}
\newcommand{\gr}{Gr\"{o}bner }
\newcommand{\vh}{\vspace*{4pt}}
\def\gr{Gr\"{o}bner }
\def\V{{\mathbb{V}}}
\def\k{L}
\begin{document}

\begin{frontmatter}

\title{Two Algorithms for Computing Rational Univariate Representations of Zero-Dimensional Ideals with Parameters}

\author[klmm,ucas]{Dingkang Wang}
\ead{dwang@mmrc.iss.ac.cn}

\author[etown]{Jingjing Wei}
\ead{weijingjing18@mails.ucas.ac.cn}

\author[hnu]{Fanghui Xiao}
\ead{xiaofanghui@hunnu.edu.cn}

\author[klmm,ucas]{Xiaopeng Zheng}
\ead{zhengxiaopeng@amss.ac.cn}

\address[klmm]{KLMM, Academy of Mathematics and Systems Science, Chinese Academy of Sciences, Beijing 100190, China}

\address[ucas]{School of Mathematical Sciences, University of Chinese Academy of Sciences, Beijing 100049, China}

\address[etown]{Beijing ETown Academy, Beijing 100176, China}

\address[hnu]{MOE-LCSM, School of Mathematics and Statistics, Hunan Normal University, Hunan 410081, China}

\begin{abstract}
 Based on the partition of parameter space, two algorithms for computing the rational univariate representation of zero-dimensional ideals with parameters are presented in the paper. Unlike the rational univariate representation of zero-dimensional ideals without parameters, the number of zeros of zero-dimensional ideals with parameters under various specializations is different, which leads to choosing and checking the separating element, the key to computing the rational univariate representation, is difficult. In order to pick out the separating element, we first ensure that under each branch the ideal has the same number of zeros by partitioning the parameter space. Subsequently two ideas are given to choose and check the separating element. One idea is that by extending the subresultant theorem to parametric cases, we utilize the extended subresultant theorem to choose the separating element with the further partition of parameter space and then with the help of parametric greatest common divisor theory compute rational univariate representations. Another one is that we go straight to choose and check the separating element by the computation of parametric greatest common divisors, then immediately get the rational univariate representations. Based on these, we design two different algorithms for computing rational univariate representations of zero-dimensional ideals with parameters. Furthermore, the algorithms have been implemented on {\em Singular} and the performance comparison are presented.

\end{abstract}

\begin{keyword}
 Rational univariate representation, Parametric zero-dimensional ideal, Comprehensive Gr\"{o}bner system
\end{keyword}
\end{frontmatter}

\section{Introduction}\label{sec1}

 Solving multivariate polynomial equations has always been a classical algebraic problem, which plays an important role in many fields. The rapid development of computer technology makes the function of computer algebra system perfect. Thus, it provides a new opportunity for the solution of polynomial equations. Many new methods for solving polynomial equations have emerged in computer algebra and algebraic geometry, such as Wu's method \citep{Wu84}, \gr basis method \citep{Becker93}, resultant-based method \citep{Cox05} and eigenvalue methods \citep{Auzinger88,Stetter96}. These new theories and methods inject new vitality into the solution of polynomial equations, making them more and more widely used in computer-aided design, computer vision, cybernetics, robot trajectory design, curve and surface design and modeling.

 \cite{Rouillier99} proposed the rational univariate representation (RUR) to solve zero-dimensional polynomial systems and presented an efficient algorithm for computing rational univariate representations. The rational univariate representation consists in expressing all the coordinates of the roots for zero-dimensional ideal $I\subset k[x_1,\ldots,x_n]$ (where $k$ is a field of characteristic 0) as rational functions of the roots of an univariate polynomial. That is, the roots of $I$ can be represented in the following way:
	$$\V_L(I)=\left\{\left(\frac{g_1(\beta)}{g(\beta)},\ldots,\frac{g_n(\beta)}{g(\beta)}\right)\Big| \;\beta\in \V_L(\mathcal{X}(T))\right\},$$
 where $\mathcal{X},g,g_1,\ldots,g_n \in k[T]$ are univariate polynomials, $L$ is an algebraic closure of $k$, $\V_L(I)$ is the variety of $I$ in $L^n$. Since then, it has been extensively studied. \cite{Noro99} used modular method to compute the rational univariate representation of zero-dimensional ideal. \cite{Ouchi08} based on toric resultants gave the computation of the rational univariate representation. \cite{Tan2009} presented an improved algorithm for finding separating elements of zero-dimensional ideals. \cite{Zeng10} proposed a method for computing the rational univariate representation by Wu's method. \cite{Ma12} presented an approach to compute the rational univariate representation via properties of \gr basis.

 \cite{Tan09,Tan10} generalized the rational univariate representation theory to high-dimensional polynomial systems and proposed the rational representation theory. Along this, \cite{Shang17} proposed a simplified rational representation and \cite{Xiao2021} presented an improvement of the rational representation by introducing minimal Dickson basis proposed by \cite{Kapur13}. Similar to rational representation, \cite{Schost03} proposed parametric geometric resolution and \cite{Mohab18} used rational parametrizations to represent all irreducible components of real algebraic sets.

 Although the rational representation of polynomial systems has formed a relatively perfect theory, the research on the rational representation of polynomial systems with parameters as the method of solving parametric multivariate polynomial equations is still blank. In this paper, we extend the rational univariate representation of zero-dimensional ideals to the parametric case. Given an ideal $I=\langle f_1(U,X),\ldots,f_l(U,X)\rangle \subset k[U,X]$ with variables $X$ and parameters $U$. The considered problem is to find finite sets $E_i,N_i \subset k[U]$, $i\in \{1,\ldots,s\}$ such that for all $\bar{u} \in \V_L(E_i)\backslash \V_L(N_i)$, $I(\bar{u})=\langle f_1(\bar{u},X),\ldots,f_l(\bar{u},X)\rangle \subset L[X]$ is a zero-dimensional ideal and give an RUR of $I(\bar{u})$ to express the variety:
 $$\V_L(I(\bar{u}))=\left\{\left(\frac{g_{i1}(\bar{u},\beta)}{g_i(\bar{u},\beta)},\ldots,\frac{g_{in}(\bar{u},\beta)}{g_i(\bar{u},\beta)}\right)\Big|\;\beta \in \V_L(\mathcal{X}_i(\bar{u},T))\right\},$$
 where $g_{i}$, $g_{i1}$, $\ldots$, $g_{in}$, $\mathcal{X}_i \in k(U)[T]$.
 That is to compute a finite set:
 $$\left\{(E_1,N_1,\mathcal{X}_1,g_1,g_{11},\ldots,g_{1n}),\ldots,(E_s,N_s,\mathcal{X}_s,g_s,g_{s1},\ldots,g_{sn})\right\}.$$

 Two ideas based on the partition of parameter space are given. First, by means of comprehensive \gr systems (see \cite{Weispfenning92,Kalkbrener97,Montes02,Suzuki06,Nabeshima07,Kapur13}) and Finiteness theorem, we pick out the zero-dimensional branches which satisfy the ideal is zero-dimensional under parametric specializations. Different from the rational univariate representation of zero-dimensional ideals without parameters, the number of zeros for zero-dimensional ideals with parameters under various specializations is different, which leads to choosing and checking the separating element, the premise and the key to computing the rational univariate representation, is difficult. Thus, the second step is to determine the number of zeros by partitioning of the parameter space such that each branch of parameter space has the same number of zeros. Subsequently, one ideas is that we directly use extended subresultant theorem \citep{MB1993} for parametric cases to choose the separating element corresponding to each branch with the further partition of parameter space and then with the help of parametric greatest common divisor theory \citep{Nagasaka17,Kapur18} compute a finite set of which each branch shares the same expression of rational univariate representation. Another one is that we go straight to check the separating elements and get the rational univariate representations by the computation of parametric greatest common divisors, which actually uses the extended subresultant theorem potentially. As a result, we present two algorithms for computing rational univariate representations of zero-dimensional ideals with parameters.

 This paper is an extension of \cite{Wang22}, and new contributions are as follows. 1) Unlike previous study which divides the parameter space by directly utilizing the extended subresultant theorem for parametric cases for choosing the separating element, we give a new idea to choose and check the separating element based on parametric greatest common divisors, then present the different algorithm for computing rational univariate representations of zero-dimensional ideals with parameters. 2) The two algorithms (the algorithm \citep{Wang22} and new algorithm) have been implemented on the computer algebra system {\em Singular}. Experimental data and their performance comparison are reported in the paper. Moreover, the codes and examples are available on the web: \url{http://www.mmrc.iss.ac.cn/~dwang/software.html}.

 The rest of the paper is organized as follows.In Section 2, we introduce some notations and definitions, and extend the subresultant theorem to parametric cases. The main content is presented in Section 3. We use comprehensive \gr systems, the extended subresultant theorem and parametric greatest common divisors to give the rational univariate representation of zero-dimensional ideals with parameters, and two algorithm are presented. In Section 4, we give an example to illustrate the algorithms. The implementation of the algorithms and the performance comparison are presented in Section 5. We end with some concluding remarks in Section 6.

\section{Preliminaries}\label{sec2}

 In this section we will introduce some notations and definitions to prepare for the discussion of this article.

 Let $k$ be a field of characteristic 0, and $L$ be its algebraic closure. $k[X]$ is the polynomial ring over $k$ in the variables $X=\{x_1,\ldots,x_n\}$ and $k[U, X]$ is the parametric polynomial ring with the parameters $U=\{u_1,\ldots,u_m\}$ and variables $X$. $I$ is an ideal of $k[X]$ and $\V_L(I)$ is the variety of $I$ in $L^n$. For a set $N \subset k[X]$, $p \in k[X]$, we have the notation $N \times p = \{f\cdot p \;|\; f\in N\}$.

 \subsection{Rational univariate representation}\label{sec2.1}
 Let $I \subset k[X]$ be a zero-dimensional ideal. Since $I$ is zero-dimensional, $k[X]/I$ is a linear space over $k$ by Finiteness theorem. For all $t \in k[X]$, we denote by $m_t$ the $k-$linear map:
    \begin{align*}
    m_t:k[X]/I & \rightarrow k[X]/I\\
    \overline{f} &\mapsto \overline{tf},
    \end{align*}
 where $\overline{f}$ denotes the residue class in $k[X]/I$ of any polynomial $f \in k[X]$.

 We denote by $M_t$ the matrix representation of $m_t$ w.r.t. a basis in quotient ring $k[X]/I$. And we call $m_t$ the multiplication map and $M_t$ the multiplication matrix.

 \begin{theorem}[\cite{Rouillier99}]\label{thm1}
 Let $I \subset k[X]$ be a zero-dimensional ideal, $t \in k[X]$ and $m_t$ be the multiplication map. Then the eigenvalues of $M_t$ are $\{t(p):p \in \V_L(I)\}$. More specifically, the characteristic polynomial of $m_t$ is
 $$\mathcal{X}_t(T)=\prod_{p \in \V_L(I)}(T-t(p))^{\mu(p)},$$
 where $\mu(p)$ is the multiplicity of $p$ in $\V_L(I)$.
 \end{theorem}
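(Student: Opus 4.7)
The plan is to use the standard structure theorem for zero-dimensional algebras to reduce the statement to a purely local computation at each point of the variety. First I would extend scalars to the algebraic closure: since $k[X]/I$ is a finite-dimensional $k$-algebra, the tensor product $L \otimes_k (k[X]/I) \cong L[X]/IL[X]$ is a finite-dimensional $L$-algebra of the same dimension, and the characteristic polynomial of $m_t$ is unchanged under this base change (it is computed from the same matrix $M_t$). Working over $L$ is essential because only then does $\V_L(I)$ capture all prime ideals of the quotient.

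Next I would invoke the Chinese Remainder Theorem for Artinian rings: because $I$ is zero-dimensional, $L[X]/IL[X]$ is Artinian and decomposes canonically as a finite direct product
\[
L[X]/IL[X] \;\cong\; \prod_{p \in \V_L(I)} A_p,
\]
where $A_p$ is the local Artinian $L$-algebra obtained by localizing at the maximal ideal $\mathfrak{m}_p = \langle x_1 - p_1, \ldots, x_n - p_n\rangle$. The multiplicity $\mu(p)$ in the statement is (or equals) $\dim_L A_p$. Since $m_t$ is $L$-linear and respects this decomposition, it acts block-diagonally, so
\[
\mathcal{X}_t(T) \;=\; \prod_{p \in \V_L(I)} \mathcal{X}_{t,p}(T),
\]
where $\mathcal{X}_{t,p}(T)$ is the characteristic polynomial of the restriction $m_t|_{A_p}$.

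The core local step is the observation that $t - t(p)$ vanishes at $p$, hence lies in $\mathfrak{m}_p$. In the local Artinian ring $A_p$ every element of the maximal ideal is nilpotent, so the image of $t - t(p)$ is nilpotent in $A_p$, meaning $m_{t - t(p)}|_{A_p}$ is a nilpotent endomorphism. Therefore $m_t|_{A_p}$ has $t(p)$ as its unique eigenvalue, and its characteristic polynomial is forced to be $(T - t(p))^{\dim_L A_p} = (T - t(p))^{\mu(p)}$. Multiplying over all $p \in \V_L(I)$ gives the desired factorization, and reading off the roots yields the eigenvalue claim.

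The main obstacle, aside from justifying the Artinian decomposition (which is textbook), is the bookkeeping around multiplicities: one must verify that the algebraic multiplicity $\dim_L A_p$ arising from the local ring coincides with the geometric notion of $\mu(p)$ intended in the statement. In most treatments this is taken as the definition of multiplicity for a zero-dimensional ideal, so the obstacle is largely definitional rather than substantive; otherwise a short separate argument comparing the two notions is needed.
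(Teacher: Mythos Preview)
Your proof is correct and follows the standard route for this result (often called Stickelberger's theorem): base change to the algebraic closure, Artinian decomposition of $L[X]/IL[X]$ into local factors indexed by $\V_L(I)$, and the observation that $t-t(p)$ is nilpotent in each local factor so that $m_t|_{A_p}$ has characteristic polynomial $(T-t(p))^{\mu(p)}$. Your remark about the multiplicity bookkeeping is apt; in this setting $\mu(p)=\dim_L A_p$ is precisely the intended definition.

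Note, however, that the paper does not supply its own proof of this theorem: it is quoted from \cite{Rouillier99} as background and stated without argument. So there is nothing to compare your approach against here; you have simply filled in the standard justification that the paper omits.
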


 Similarly, given the polynomial $h \in k[X]$, we can construct a bilinear form
    $$S_h(f,g)=\text{Tr}(m_{hf} \cdot m_{g})=\text{Tr}(m_{hfg}).$$

 Suppose that $B=\{X^{\alpha_1},\ldots, X^{\alpha_s}\}$ is a basis of $k[X]/I$. Let $Q_h$ be the matrix of $S_h$ w.r.t. $B$, i.e., $Q_h=(S_h(X^{\alpha_i},X^{\alpha_j}))_{1\leq i,j \leq s}\in k^{s \times s}$. $Q_h$ is called the Hermite's quadratic form associated $h$ w.r.t. $B$ in quotient ring $k[X]/I$.

    \begin{theorem}[\cite{Cox06}, Page 71, Theorem 5.2]\label{thm2}
    Let $I \subset k[X]$ be a zero-dimensional ideal. Then the number of points in $\V_L(I)$ is equal to the rank of $Q_1$.
    \end{theorem}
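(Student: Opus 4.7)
The plan is to decompose the Hermite matrix $Q_1$ as $A^{T} D A$, where $A$ records the evaluation of the basis monomials at the points of $\V_L(I)$ and $D$ encodes the multiplicities, and then to read off its rank geometrically. To set this up I would work over the algebraic closure $L$ (the rank of a matrix with entries in $k$ is unchanged by field extension), and use the Chinese Remainder decomposition $L[X]/IL[X] \cong \prod_{p \in \V_L(I)} A_p$ into local Artinian $L$-algebras with $\dim_L A_p = \mu(p)$.

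First I would compute the entries. By definition, $Q_1[i,j] = \operatorname{Tr}(m_{X^{\alpha_i+\alpha_j}})$, and Theorem \ref{thm1} evaluates this trace as the sum of eigenvalues with multiplicity, giving
\[
Q_1[i,j] \;=\; \sum_{p \in \V_L(I)} \mu(p)\, X^{\alpha_i}(p)\, X^{\alpha_j}(p).
\]
Enumerating $\V_L(I) = \{p_1,\ldots,p_r\}$ with multiplicities $\mu_1,\ldots,\mu_r$, let $A \in L^{r \times s}$ have entries $A_{l,i} = X^{\alpha_i}(p_l)$ and let $D = \operatorname{diag}(\mu_1,\ldots,\mu_r)$. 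The formula above then reads $Q_1 = A^{T} D A$.

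Next I would determine the rank of $A$. The rows of $A$ are precisely the evaluation functionals $\mathrm{ev}_{p_l}$ restricted to the basis $B=\{X^{\alpha_1},\ldots,X^{\alpha_s}\}$. Because the points $p_l$ are pairwise distinct, the maximal ideals $\mathfrak{m}_{p_l} \subset L[X]$ are pairwise coprime, so the Chinese Remainder Theorem produces polynomials $e_l \in L[X]/IL[X]$ with $e_l(p_j) = \delta_{lj}$; these witness that the functionals $\mathrm{ev}_{p_1},\ldots,\mathrm{ev}_{p_r}$ are linearly independent on the quotient. Hence $A$ has full row rank $r$, which also forces $r \le s$.

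Finally I would combine these two facts. Since $D$ is diagonal with nonzero entries, $DA$ has the same rank as $A$, namely $r$, so $DA$ is surjective onto $L^{r}$. Because $A$ has rank $r$, its transpose $A^{T}: L^{r} \to L^{s}$ is injective. Therefore $\operatorname{rank}(Q_1) = \operatorname{rank}(A^{T}(DA)) = r = |\V_L(I)|$, which is exactly the assertion. The only place where something nontrivial happens is the linear independence of the evaluation functionals; everything else is linear-algebraic bookkeeping on top of Theorem \ref{thm1}, so I expect the Chinese-Remainder step to be the conceptual crux, while the rank manipulation is routine.
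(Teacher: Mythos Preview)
The paper does not give its own proof of this statement: Theorem~\ref{thm2} is quoted verbatim as a known result from \cite{Cox06}, Page~71, Theorem~5.2, and is used as a black box thereafter. So there is no argument in the paper to compare against.

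Your proof is correct and is essentially the standard one found in the cited reference. The factorisation $Q_1 = A^{T}DA$ via the trace formula from Theorem~\ref{thm1} is exactly right, and the key step---linear independence of the evaluation functionals $\mathrm{ev}_{p_l}$ on $L[X]/IL[X]$---is properly justified by the Chinese Remainder Theorem. The concluding rank computation is clean; note that the standing hypothesis $\operatorname{char} k = 0$ is what guarantees the multiplicities $\mu_l$ are invertible, so $D$ is genuinely nonsingular. Nothing is missing.
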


 Next, we review the definition of separating elements which plays an important role in rational univariate representation theory.
 \begin{definition}\label{def1}
   Let $I \subset k[X]$ be a zero-dimensional ideal. A polynomial $t \in k[X]$ separates $\V_L(I)$, if
   $$p_1,p_2 \in \V_L(I), p_1 \neq p_2 \Rightarrow t(p_1)\neq t(p_2).$$
   We also call $t$ a separating element of $I$.	
 \end{definition}

 \begin{definition}
  Let $I \subset k[X]$ be a zero-dimensional ideal, and $t\in k[X]$. $\mathcal{X}_t$ is the characteristic polynomial of multiplication map $m_t$.
  For any $v \in k[X]$, we define:
 \[g(v,T)=\sum\limits_{\alpha\in \V_{\k}(I)}\mu(\alpha)v(\alpha)\prod\limits_{y\neq t(\alpha),y\in \V_{\k}(\mathcal{X}_t)}(T-y).\]

  For any $t \in k[X]$, the $t$-representation of $I$ is the $(n+2)$-tuple:
  \[\{\mathcal{X}_t(T),g(1,T),g(x_1,T),\ldots,g(x_n,T)\}.\]

  If $t$ separates $\V_{\k}(I)$, the $t$-representation of $I$ is called the {\bf{Rational Univariate Representation}} (RUR) of $I$ associated to $t$.
 \end{definition}

 In \cite{Rouillier99}, Rouillier proposed an algorithm to compute the RUR of a zero-dimensional ideal.

\begin{center}
\textbf{Algorithm for Computing the RUR}
\end{center}

{\bf Input:} $I=\langle f_1,\ldots,f_l\rangle \subset k[X]$ is a zero-dimensional ideal.

\begin{enumerate}
\item Compute the \gr basis of $I$ with a monomial order $\prec$ and a basis $B$ of $k[X]/I$ over $k$.
\item Compute the rank of $Q_1$ with respect to $B$.
\item Set $d=\text{rank}(Q_1)$ and choose $t \in S=\{x_{1}+i x_{2}+\cdots+i^{n-1} x_{n},$ $ i=1, \ldots, nd(d-1) / 2\}$ randomly, then compute the characteristic polynomial $\mathcal{X}_t$ of $m_t$.
\item Let $\overline{\mathcal{X}_{t}}=\mathcal{X}_{t} / \gcd \left(\mathcal{X}_{t}^{\prime}, \mathcal{X}_{t}\right)$. If $\deg(\overline{\mathcal{X}_{t}}) \neq \text{rank}(Q_1)$, then goto (3).
\item Suppose that $\overline{\mathcal{X}_{t}}=\sum_{j=0}^{d}a_jT^{d-j}$. Compute
\begin{align}
g(T)=&\sum_{i=0}^{d-1}\sum_{j=0}^{d-i-1}\text{Tr}(m_{t^i})a_jT^{d-i-j-1}, \label{eq1}\\
g_k(T)=&\sum_{i=0}^{d-1}\sum_{j=0}^{d-i-1}\text{Tr}(m_{x_kt^i})a_jT^{d-i-j-1},k=1,\ldots,n. \label{eq2}
\end{align}
\end{enumerate}

{\bf Output:} $\left\{\mathcal{X}_{t}(T), g(T), g_1(T), \ldots, g_n(T)\right\} \subset k[T]$.

By the above algorithm, the variety of $I$ can be represented as
$$\V_L(I)=\left\{\left(\frac{g_1(\beta)}{g(\beta)},\ldots,\frac{g_n(\beta)}{g(\beta)}\right)\Big|\;\beta \in \V_L(\mathcal{X}_t(T))\right\}.$$

There are some remarks:
\begin{itemize}
	\item  The Step (4) is aimed to check whether $t$ is a separating element of $I$. According to Definition \ref{def1}, Theorem \ref{thm1} and Theorem \ref{thm2}, $t$ is a separating element of $I$ if and only if $\deg(\overline{\mathcal{X}_{t}})=\text{rank}(Q_1)$. In \cite{Rouillier99}, it has been proved that $S$ contains at least one element that separates $\V_L(I)$.
	\item We can use Formula (\ref{eq1}) and (\ref{eq2}) to compute $g(T)$ and $g_i(T)$ only if $t$ is a separating element.
%It is why we need to check whether $t$ is a separating element.
	\item The RUR of $I$ computed by this algorithm is not unique. It depends on the choice of the separating element.
\end{itemize}

\subsection{Comprehensive \gr System and Parametric Greatest Common Divisor}
 In order to deal with the polynomial system with parameters, we will introduce the comprehensive \gr system. Fix a monomial order $\prec_X$ in $k[X]$ and monomial order $\prec_U$ in $k[U]$. Let $\prec_{X,U}$ be a block order, $U \prec \prec X$, within $\prec_U$ and $\prec_X$. For convenience, we often omit writing the order if there is no confusion.

\begin{definition}[\cite{Weispfenning92}]
	Let $I=\langle f_1,\ldots,f_l\rangle \subset k[U,X]$, $S$ be a subset of $L^m$, $G_i =\{g_{i1},\ldots, g_{it}\}\subset k[U, X]$ and $E_i,N_i \subset k[U]$ for $i=1,\ldots, s$ such that $S = \bigcup_{i=1}^{l}\V_L(E_i)\backslash \V_L(N_i)$. A finite set
$$\mathcal{G}=\{(E_1,N_1,G_1),\ldots,(E_s,N_s,G_s)\}$$
is called a comprehensive \gr system (CGS) on $S$ for $I$ if for every $i \in \{1,\ldots, s\}$, $G_i(\bar{u})=\{g_{i1}(\bar{u},X),\ldots, g_{it}(\bar{u},X)\}$ is a \gr basis of $I(\bar{u})=\langle f_1(\bar{u},X),\ldots,f_l(\bar{u},X)\rangle \subset L[X]$ for $\bar{u}\in \V_L(E_i)\backslash \V_L(N_i)$. Each $(E_i,N_i,G_i)$ is called a branch of $\mathcal{G}$. In particular, if $S=L^m$, then $\mathcal{G}$ is called a comprehensive \gr system for $I$.
\end{definition}

\begin{definition}
A comprehensive Gröbner system $\mathcal{G}=\{(E_{1},N_1,G_{1}),$ $\ldots,(E_{s}, N_s,G_{s})\}$ on $S$ for $I$ is said to be minimal, if for each $i=1, \ldots, s$,
	\begin{enumerate}
		\item $\V_L(E_{i})\backslash \V_L(N_{i}) \neq \emptyset$, and furthermore, whenever $i \neq j$,
		$$\V_L(E_{i})\backslash \V_L(N_{i})\cap \V_L(E_{j})\backslash \V_L(N_{j})=\emptyset;$$
		\item $G_{i}(\bar{u})$ is a minimal Gröbner basis for $I(\bar{u}) \subset L[X]$
		for $\bar{u} \in \V_L(E_{i})\backslash \V_L(N_{i})$;
		\item for each $g\in G_{i}$, assume that $c_1={\rm LC}_X(g)$ is the leading coefficient of $g$ w.r.t. $X$, then $c_1(\bar{u})\neq 0$ for any $\bar{u} \in \V_L(E_i)\backslash \V_L(N_i)$.
		\end{enumerate}
\end{definition}

For computing the minimal comprehensive \gr system, we can refer to the algorithm proposed by \citet{Kapur13}.

Based on the comprehensive \gr system, \citet{Kapur18} proposed an efficient algorithm for computing the greatest common divisor of polynomials with parameters.

\begin{definition}
 	For $F=\left\{f_{1}, \ldots, f_{l}\right\} \subset k[U][X]$ and $S \subset L^{m}$, we call $\left\{\left(E_{1}, N_{1}, g_{1}\right), \ldots,\left(E_{r}, N_{r}, g_{r}\right)\right\}$ a parametric GCD of $F$ on $S$, if for each $i=1, \ldots, r$, $g_i(\bar{u},X) \in L[X]$ is a greatest common divisor of $\{f_{1}(\bar{u},X), \ldots, f_{l}(\bar{u},X)\} \subset L[X]$ for  $\bar{u} \in \V_L(E_i)\backslash \V_L(N_i)$, where $E_i,N_i \subset k[U]$, $g_{i} \in k[U][X]$ for $i=1,\ldots, r$ and $S=\cup_{i=1}^{r} \V_L(E_i)\backslash \V_L(N_i)$. If $S=k^{m}$, we simply call it a parametric $\mathrm{GCD}$ of $F$.
 \end{definition}

\subsection{Subresultant and Common Divisor}
  In the process of computing the RUR, we can see that it is necessary to compute $\gcd(\mathcal{X}_{t}^{\prime},\mathcal{X}_{t})$. Therefore, if the polynomial system is with parameters, we need to discuss the relationship between $\gcd(\mathcal{X}_{t}^{\prime},\mathcal{X}_{t})$ and the value of parameters. In the following, we will build the connection between the degree of $\gcd(\mathcal{X}_{t}^{\prime},\mathcal{X}_{t})$ and the parameters $U$ by using the subresultant theory.

  First, we review the theorem of the subresultant.

 \begin{theorem}[\cite{MB1993}]\label{thm3}
 Let $S$ be a unique factorization domain with identity, and $A(x), B(x) \in S[x]$ be univariate polynomials of positive degrees $m$ and $n$, respectively. Then for all $0 \leq i<\min (m, n)$, the following three statements are equivalent:
 \begin{enumerate}
		\item $A(x)$ and $B(x)$ have a common divisor with degree $>i$;
		\item $(\forall j \leq i)\left[\operatorname{SubRes}_{j}(A, B)=0\right]$;
		\item $(\forall j \leq i)\left[\operatorname{PSC}_{j}(A, B)=0\right]$.
 \end{enumerate}
 where $SubRes_{j}(A, B)$ denotes the $i$th subresultant of $A$ and $B$, ${PSC}_{j}(A, B)$ denotes the leading coefficient of $SubRes_{j}(A, B)$.
 \end{theorem}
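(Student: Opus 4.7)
I propose to prove the equivalences via the cycle $(1) \Rightarrow (2) \Rightarrow (3) \Rightarrow (1)$, building on the standard determinantal description of subresultants together with Habicht's block structure theorem. Recall first that $\operatorname{SubRes}_j(A,B)$ can be realized as a polynomial in $x$ of formal degree at most $j$ admitting a B\'ezout-type expression $\operatorname{SubRes}_j(A,B) = u_j(x)\,A(x) + v_j(x)\,B(x)$ with $u_j, v_j \in S[x]$ satisfying $\deg u_j < n-j$ and $\deg v_j < m-j$, and that the coefficient of $x^{j}$ in $\operatorname{SubRes}_j$ is exactly $\operatorname{PSC}_j(A,B)$. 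These facts follow by expanding an $(m+n-2j)\times(m+n-2j)$ block Sylvester-type determinant whose columns encode shifts of the coefficient vectors of $A$ and $B$ together with a trailing column of monomials $1, x, \ldots, x^{j}$.

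For $(1) \Rightarrow (2)$: if $D \in S[x]$ is a common divisor of $A$ and $B$ with $\deg D > i$, then for each $j \leq i$ the divisibility $D \mid u_j A + v_j B = \operatorname{SubRes}_j$ combined with $\deg \operatorname{SubRes}_j \leq j \leq i < \deg D$ forces $\operatorname{SubRes}_j = 0$. The step $(2) \Rightarrow (3)$ is immediate since $\operatorname{PSC}_j$ is a coefficient of $\operatorname{SubRes}_j$.

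The essential work lies in $(3) \Rightarrow (1)$. Here I would invoke Habicht's gap theorem for the subresultant chain: whenever $\deg \operatorname{SubRes}_k = e < k$, every intermediate $\operatorname{SubRes}_{e+1}, \ldots, \operatorname{SubRes}_{k-1}$ vanishes and $\operatorname{SubRes}_e$ is a nonzero scalar multiple of $\operatorname{SubRes}_k$, combined with the fundamental identification $\deg \gcd(A,B) = \min\{k : \operatorname{SubRes}_k \neq 0\}$ in the fraction field of $S$, with $\operatorname{SubRes}_{\deg \gcd(A,B)}$ being an associate of $\gcd(A,B)$. Granting these, assume $\operatorname{PSC}_j = 0$ for every $j \leq i$: applying the gap theorem inductively on $j$ upgrades each inequality $\deg \operatorname{SubRes}_j < j$ to $\operatorname{SubRes}_j = 0$, so the smallest index $k^{*}$ with $\operatorname{SubRes}_{k^{*}} \neq 0$ must satisfy $k^{*} > i$. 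Clearing content in $\operatorname{SubRes}_{k^{*}}$ then produces a common divisor of $A$ and $B$ in $S[x]$ of degree $k^{*} > i$.

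The main obstacle will be Habicht's block structure theorem itself, since its justification requires a delicate minor-by-minor analysis of the Sylvester matrix that propagates a single degree drop into a full block of vanishing subresultants plus a proportional relation at the next regular index. If one prefers to bypass Habicht, a more elementary linear-algebra route is available: the simultaneous vanishing of $\operatorname{PSC}_0, \ldots, \operatorname{PSC}_i$ exhibits a column rank deficiency in the Sylvester-type matrix that produces a nontrivial combination $uA + vB$ with $\deg u < n - i - 1$, $\deg v < m - i - 1$, and $\deg(uA + vB) \leq i$; running the Euclidean algorithm in the fraction field on this relation and clearing denominators then shows $\deg \gcd(A,B) > i$.
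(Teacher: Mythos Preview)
The paper does not actually prove this theorem; it is stated as a cited result from \cite{MB1993} and used as a black box in the proofs of Lemma~\ref{cor1} and Theorem~\ref{thm33}. So there is no ``paper's own proof'' to compare against.

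That said, your outline is a faithful sketch of the standard argument one finds in the cited reference. The implications $(1)\Rightarrow(2)$ via the B\'ezout identity $\operatorname{SubRes}_j=u_jA+v_jB$ together with the degree bound, and $(2)\Rightarrow(3)$ as a triviality, are exactly right. For $(3)\Rightarrow(1)$ both routes you mention are viable: the Habicht block-structure argument is the cleanest conceptually, while the rank-deficiency argument on the Sylvester matrix is more self-contained. One small caveat worth making explicit in a UFD setting is that the passage to the fraction field and back relies on Gauss's lemma to ensure the gcd computed over the field of fractions yields, after clearing content, a genuine common divisor in $S[x]$ of the correct degree; you allude to this with ``clearing content'' and ``clearing denominators'' but it deserves a sentence in a full write-up.
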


 Now we extend this result to the case for parameters and obtain the relationship between the degree of $\gcd(\mathcal{X}_{t}^{\prime},\mathcal{X}_{t})$ and the values of parameters.

 \begin{lemma}\label{cor1}
 Let $\mathcal{X}(U,T)=\sum_{i=0}^{l}c_i(U)T^i \in k[U,T]$ and $E,N \subset k[U]$. Suppose that for $\bar{u} \in \V_L(E) \backslash \V_L(N)$, $c_l(\bar{u})\neq 0$, then there exist $E_0, \ldots, E_{l-1}, N_0, \ldots, N_{l-1} \subset k[U]$ such that for each $0 \leq i \leq l-1$ and for all $\bar{u} \in \V_L(E) \backslash \V_L(N)$,
	$$\deg\left(\gcd(\mathcal{X}(\bar{u},T),\mathcal{X}^{\prime}(\bar{u},T))\right)=i \iff \bar{u} \in \V_L(E_i) \backslash \V_L(N_i).$$
 where $\mathcal{X}' \in k[U,T]$ is the derivation of $\mathcal{X}$ w.r.t. $T$.
 \end{lemma}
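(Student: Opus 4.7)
The plan is to reduce the statement to a parametric specialization of the subresultant theorem (Theorem~\ref{thm3}) applied to $\mathcal{X}$ and its $T$-derivative $\mathcal{X}'$, both viewed as polynomials in $T$ with coefficients in the UFD $k[U]$. On $\V_L(E)\backslash\V_L(N)$, the hypothesis $c_l(\bar{u})\neq 0$ guarantees that the leading $T$-coefficient $c_l$ of $\mathcal{X}$ and the leading $T$-coefficient $l\cdot c_l$ of $\mathcal{X}'$ are both nonzero after specialization at $\bar{u}$. Consequently, the $T$-degrees of $\mathcal{X}$ and $\mathcal{X}'$ are preserved under specialization, so each subresultant $\mathrm{SubRes}_j(\mathcal{X},\mathcal{X}')\in k[U][T]$ specializes at $\bar{u}$ to $\mathrm{SubRes}_j(\mathcal{X}(\bar{u},T),\mathcal{X}'(\bar{u},T))\in L[T]$; in particular the principal subresultant coefficient $\mathrm{PSC}_j(\mathcal{X},\mathcal{X}')\in k[U]$ specializes to the PSC of the specialized pair.

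Applying Theorem~\ref{thm3} pointwise in $L[T]$ then gives, for every $\bar{u}\in\V_L(E)\backslash\V_L(N)$ and every $0\leq j<l-1=\min(l,l-1)$,
$$\deg\bigl(\gcd(\mathcal{X}(\bar{u},T),\mathcal{X}'(\bar{u},T))\bigr)>j \iff \mathrm{PSC}_0(\bar{u})=\cdots=\mathrm{PSC}_j(\bar{u})=0.$$
Equivalently, for $0\leq i\leq l-2$, the gcd has degree exactly $i$ iff $\mathrm{PSC}_0(\bar{u})=\cdots=\mathrm{PSC}_{i-1}(\bar{u})=0$ and $\mathrm{PSC}_i(\bar{u})\neq 0$; and for the extremal value $i=l-1$, since the gcd has degree at most $\min(l,l-1)=l-1$ automatically, the degree equals $l-1$ iff $\mathrm{PSC}_0(\bar{u})=\cdots=\mathrm{PSC}_{l-2}(\bar{u})=0$. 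Using these characterizations, I would define
$$E_i=E\cup\{\mathrm{PSC}_0,\ldots,\mathrm{PSC}_{i-1}\},\quad N_i=N\times \mathrm{PSC}_i\quad(0\leq i\leq l-2),$$
and $E_{l-1}=E\cup\{\mathrm{PSC}_0,\ldots,\mathrm{PSC}_{l-2}\}$, $N_{l-1}=N$, with the convention that the adjoined set of PSCs is empty when $i=0$.

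Finally, I would verify that $\bar{u}\in\V_L(E_i)\backslash\V_L(N_i)$ translates exactly to the condition $\deg\gcd(\mathcal{X}(\bar{u},T),\mathcal{X}'(\bar{u},T))=i$. This uses the identity $\V_L(N\times \mathrm{PSC}_i)=\V_L(N)\cup\V_L(\mathrm{PSC}_i)$ inherent in the $\times$-notation: $\bar{u}\in\V_L(E_i)\backslash\V_L(N_i)$ unpacks to $\bar{u}\in\V_L(E)\backslash\V_L(N)$ together with the prescribed vanishing of $\mathrm{PSC}_0,\ldots,\mathrm{PSC}_{i-1}$ and (for $i\leq l-2$) the non-vanishing of $\mathrm{PSC}_i$, which by the preceding equivalence is the required degree condition. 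The main obstacle is making the specialization of subresultants precise and handling the boundary case $i=l-1$, where Theorem~\ref{thm3} stops one short of $\min(m,n)$; both points rest on the observation that the leading coefficients of $\mathcal{X}$ and $\mathcal{X}'$ in $T$ do not vanish anywhere on $\V_L(E)\backslash\V_L(N)$.
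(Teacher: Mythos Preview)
Your proposal is correct and follows essentially the same route as the paper: reduce to the specialization behavior of subresultants, use the non-vanishing of $c_l$ on $\V_L(E)\backslash\V_L(N)$ to ensure $\mathrm{PSC}_j(\mathcal{X},\mathcal{X}')$ specializes to $\mathrm{PSC}_j$ of the specialized pair, and then define $E_i$ and $N_i$ exactly as you do (the paper's constructions are identical, with the same separate treatment of the top case $i=l-1$). If anything, you are more explicit than the paper about why the boundary case works and how the $\times$-notation unpacks.
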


 \begin{proof}
 We regard $\mathcal{X}$ and $\mathcal{X}^\prime$ as polynomials in $S[T]$, where $S=k[U]$. Then $\operatorname{PSC}_{j}(\mathcal{X},\mathcal{X}^\prime) \in k[U]$. For $\bar{u} \in \V_L(E) \backslash \V_L(N)$, $c_l(\bar{u})\neq 0$, so
 $$\deg_T(\mathcal{X}(U,T))=\deg(\mathcal{X}(\bar{u},T))~\text{and} \deg_T(\mathcal{X}^\prime(U,T))=\deg(\mathcal{X}^\prime(\bar{u},T)).$$
 It implies that the degree of $\mathcal{X}(U,T)$ and $\mathcal{X}'(U,T)$ w.r.t. $T$ does not change under $\bar{u}\in \V_L(E) \backslash \V_L(N)$. By the definition, for $\bar{u} \in \V_L(E) \backslash \V_L(N)$, $$\operatorname{PSC}_{j}(\mathcal{X},\mathcal{X}^\prime)(\bar{u})=\operatorname{PSC}_{j}(\mathcal{X}(\bar{u},T),\mathcal{X}^\prime(\bar{u},T)), j=0,1,\ldots,l-1.$$

 Let $\lambda=l-1$ and
	$$E_i=E \cup \{\operatorname{PSC}_{j}(\mathcal{X},\mathcal{X}^\prime)|\;j<i\},\;0 \leq i \leq \lambda,$$
	$$N_i=N \times \{\operatorname{PSC}_{i}(\mathcal{X},\mathcal{X}^\prime)\}, \;0 \leq i < \lambda,\text{ and }N_\lambda=N,$$
	where $N \times \{f\}=\{gf |\; g\in N\}$). By Theorem \ref{thm3}, for each $0 \leq i<\lambda$ and  $\bar{u} \in \V_L(E) \backslash \V_L(N)$, the following statements are equivalent:
	\begin{enumerate}
		\item $\deg\left(\gcd(\mathcal{X}(\bar{u},T),\mathcal{X}^{\prime}(\bar{u},T))\right)=i$;
		\item $(\forall j <i)\;[\operatorname{PSC}_{j}(\mathcal{X}(\bar{u},T),\mathcal{X}^\prime(\bar{u},T))=0]$ and $\operatorname{PSC}_{i}(\mathcal{X}(\bar{u},T),$ $\mathcal{X}^\prime(\bar{u},T)) \neq 0$;
		\item $(\forall j <i)\;[\operatorname{PSC}_{j}(\mathcal{X},\mathcal{X}^\prime)(\bar{u})=0]$ and $\operatorname{PSC}_{i}(\mathcal{X},\mathcal{X}^\prime)(\bar{u}) \neq 0$;
		\item $\bar{u} \in \V_L(E_i) \backslash \V_L(N_i)$.
	\end{enumerate}
    Moreover, it is easy to verify that the conclusion is also true when $i=\lambda$.
\end{proof}

\section{RUR of zero-dimensional ideal with parameters}\label{sec3}
 In the section, we will consider the RUR of zero-dimensional ideal with parameters $U$. Let $I= \langle f_1(U,X),\ldots,f_l(U,X) \rangle \in k[U,X]$ and $E,N \subset k[U]$. For $f \in k[U]$, $N \times \{f\}=\{gf |\; g\in N\}$. For every $\bar{u} \in \V_L(E)\backslash \V_L(N)$, the ideal generated by $f_1(\bar{u},X),\ldots,$ $f_l(\bar{u},X)$ is denoted by $I(\bar{u})$. We want to find finite sets $E_i,N_i \subset k[U]$, $i\in \{1,\ldots,s\}$ such that for $\bar{u} \in \V_L(E_i)\backslash \V_L(N_i)$, $I(\bar{u})$ is a zero-dimensional ideal and give an RUR of $I(\bar{u})$ to express
 $$\V_L(I(\bar{u}))=\left\{\left(\frac{g_{i1}(\bar{u},\beta)}{g_i(\bar{u},\beta)},\ldots,\frac{g_{in}(\bar{u},\beta)}{g_i(\bar{u},\beta)}\right)\Big|\;\beta \in \V_L(\mathcal{X}_i(\bar{u},T))\right\},$$
 where $g_{i}$, $g_{i1}$, $\ldots$, $g_{in}$, $\mathcal{X}_i \in k(U)[T]$.

 The ideas are based on the partition of parameter space. The first idea contains four steps and another contains three steps, among which the first two steps are same.

 The first idea is as follows.

 {\itshape {\bf Step 1.~}} Pick zero-dimensional branches. That is to compute $E_i,N_i \subset k[U]$, $i=1,\ldots,s$, such that $I(\bar{u})$ is a zero-dimensional ideal if and only if $\bar{u}$ belongs to some $\V_L(E_i)\backslash \V_L(N_i)$.

 {\itshape {\bf Step 2.~}} Determine the number of zeros. For each $E_i,N_i$ in Step 1, compute $E_{ij},N_{ij} \subset k[U]$ such that $\V_L(E_{i})\backslash \V_L(N_{i})=\bigcup_{j=1}^{s_1}\V_L(E_{ij})\backslash$ $\V_L(N_{ij})$ and for each $j \in \{1,\ldots,s_1\}$, $I(\bar{u})$ has the same number of zeros for all $\bar{u}\in \V_L(E_{ij})\backslash \V_L(N_{ij})$.

 {\itshape {\bf Step 3.}} Choose and check the separating element. For each $E_{ij},N_{ij}$ in Step 2, compute $E_{ijk},N_{ijk} \subset k[U]$ such that $\V_L(E_{ij})\backslash$ $\V_L(N_{ij})$ $=\bigcup_{k=1}^{s_2}\V_L(E_{ijk})\backslash\V_L(N_{ijk})$ and for each $k \in \{1,\ldots,s_2\}$, $I(\bar{u})$ shares the same separating element for all $\bar{u}\in \V_L(E_{ijk})\backslash \V_L(N_{ijk})$.

 {\itshape {\bf Step 4.~}} Give the RUR for each branch. For each $E_{ijk},N_{ijk}$ in Step 3, compute $E_{ijkl},N_{ijkl} \subset k[U]$ such that $\V_L(E_{ijk}) \backslash \V_L(N_{ijk})=\bigcup_{l=1}^{s_3}\V_L(E_{ijkl})\backslash \V_L(N_{ijkl})$ and for each $l \in \{1,\ldots,s_3\}$, for all $\bar{u}\in \V_L(E_{ijkl})\backslash \V_L(N_{ijkl})$, $I(\bar{u})$ shares the same expression of rational univariate representation. 	

\vh
 The second idea is to skip the third step and go straight to check the separating elements and get the RURs by computing GCDs of $\mathcal{X}_t$ and $\mathcal{X}_t^{\prime}$. That is to say,

 {\itshape {\bf Step 3$^\spadesuit$.}} Choose the separating element and compute GCDs of $\mathcal{X}_t$ and $\mathcal{X}_t^{\prime}$. For each $E_{ij},N_{ij}$ in Step 2, compute $E_{ijk},N_{ijk} \subset k[U]$ such that $\V_L(E_{ij})\backslash$ $\V_L(N_{ij})$ $=\bigcup_{k=1}^{s_2}\V_L(E_{ijk})\backslash\V_L(N_{ijk})$, and for each $k \in \{1,\ldots,s_2\}$, the GCD of $\mathcal{X}_t$ and $\mathcal{X}^\prime_t$ has the same expression, and $I(\bar{u})$ shares the same separating element as well as the same expression of rational univariate representation for all $\bar{u}\in \V_L(E_{ijk})\backslash \V_L(N_{ijk})$.

\subsection{Algorithm 1}

 Based on four steps described above, we can design the first algorithm to compute RURs of zero-dimensional ideal with parameters. Next, we will explain these steps in detail, and then present the algorithm.

\subsubsection{Partition according to the \gr basis}
  In fact, Step 1 is easy to be achieved by comprehensive \gr system and Finiteness Theorem.

  \begin{theorem}[Finiteness Theorem \cite{Cox06}]
  	Let $I \subset k\left[X\right]$ be an ideal. Then the following conditions are equivalent:
  	\begin{enumerate}
  		\item The algebra $A=k[x_{1}, \ldots, x_{n}] / I$ is finite-dimensional over $k$.
  		\item The variety $\V_L(I) \subset L^{n}$ is a finite set.
  		\item If $G$ is a \gr basis for $I$, then for each $1 \leq i \leq n$, there is an $m_{i} \geq 0$ such that $x_{i}^{m_{i}}=\rm{LM}(g)$ for some $g \in G$, where $\operatorname{LM}(g)$ is the leading monomial of $g$.
  	\end{enumerate}
  \end{theorem}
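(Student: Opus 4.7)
The plan is to establish the three-way equivalence along the chain $(3)\Rightarrow(1)\Rightarrow(2)\Rightarrow(3)$, treating the last implication as the main obstacle since it requires an appeal to Hilbert's Nullstellensatz to pass from a geometric finiteness statement back to a structural statement about the Gr\"obner basis.

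For $(3)\Rightarrow(1)$ I would invoke the standard monomial basis. Fix a Gr\"obner basis $G$ for $I$ and recall that the residue classes modulo $I$ of those monomials $X^\alpha$ not divisible by any $\mathrm{LM}(g)$, $g\in G$, form a $k$-basis of $A$: the division algorithm gives a unique normal form expressing every class as a $k$-linear combination of such ``standard'' monomials, and the standard monomials are linearly independent modulo $I$. Under condition (3), for every $i$ there exists $m_i$ with $x_i^{m_i}=\mathrm{LM}(g)$ for some $g\in G$, so every standard monomial $x_1^{\alpha_1}\cdots x_n^{\alpha_n}$ satisfies $\alpha_i<m_i$. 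This yields at most $m_1\cdots m_n$ standard monomials, so $\dim_k A<\infty$.

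For $(1)\Rightarrow(2)$ I would use a univariate-elimination argument. If $\dim_k A<\infty$, the sequence $\overline{1},\overline{x_i},\overline{x_i^2},\ldots$ must be linearly dependent in $A$, which produces a nonzero univariate polynomial $p_i(x_i)\in I$ for each $i$. Any point $(a_1,\ldots,a_n)\in\V_L(I)$ has $a_i$ a root of $p_i$, and each $p_i$ has only finitely many roots in $L$, so $\V_L(I)$ is contained in a finite product set and hence is finite.

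The hard part is $(2)\Rightarrow(3)$. Here I would argue as follows. Since $L$ is algebraically closed and $\V_L(I)$ is finite, the $i$-th coordinate projection of $\V_L(I)$ is a finite set $\{a_{i,1},\ldots,a_{i,r_i}\}\subset L$, so the univariate polynomial $q_i(x_i)=\prod_{j=1}^{r_i}(x_i-a_{i,j})$ vanishes identically on $\V_L(I)$. By Hilbert's Nullstellensatz, some power $q_i^N$ lies in $I$. Its leading monomial, with respect to any monomial order, is the pure power $x_i^{Nr_i}$; because $G$ is a Gr\"obner basis, $x_i^{Nr_i}$ must be divisible by $\mathrm{LM}(g)$ for some $g\in G$. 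Since any monomial dividing a pure power of $x_i$ is itself a pure power of $x_i$, this delivers the desired $m_i$ with $x_i^{m_i}=\mathrm{LM}(g)$. The delicate point is precisely this descent through the Nullstellensatz together with the observation that the Gr\"obner basis element whose leading monomial divides $x_i^{Nr_i}$ must itself have a pure-power leading monomial rather than a mixed one.
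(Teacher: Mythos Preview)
The paper does not supply a proof of this statement at all: it is quoted verbatim as the Finiteness Theorem from \cite{Cox06} and used as a black box. Your argument is the standard textbook proof and is correct; the chain $(3)\Rightarrow(1)\Rightarrow(2)\Rightarrow(3)$ is exactly how it is done in Cox--Little--O'Shea.

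One small technical point worth making explicit in your $(2)\Rightarrow(3)$ step: the polynomial $q_i$ you build has coefficients in $L$, not in $k$, and the Nullstellensatz you invoke places $q_i^N$ in the extended ideal $I\cdot L[X]$ rather than in $I$ itself. This is harmless because a Gr\"obner basis $G$ of $I$ over $k$ remains a Gr\"obner basis of $I\cdot L[X]$ over $L$ (leading monomials are unchanged under field extension and Buchberger's criterion is preserved), so the divisibility of $x_i^{Nr_i}$ by some $\mathrm{LM}(g)$ with $g\in G$ still follows. You implicitly use this when you say ``because $G$ is a Gr\"obner basis''; spelling it out would close the only visible seam.
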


  An ideal satisfying any of the above conditions is said to be zero-dimensional.

 \begin{theorem}\label{thm3.2}
  	Let $I=\langle f_1(U,X),\ldots,f_l(U,X)\rangle \subset k[U,X]$. There exists a finite set
  	$\{(E_1,N_1,G_1),\ldots,$ $(E_s,N_s,G_s)\},$
  	such that:
  	\begin{enumerate}
  		\item $I(\bar{u})$ is a zero-dimensional ideal if and only if $$\bar{u}\in \bigcup_{i=1}^s\V_L(E_i) \backslash \V_L(N_i).$$
  		\item For $\bar{u} \in \V_L(E_i) \backslash \V_L(N_i)$, $G_i(\bar{u})=\{g_{i1}(\bar{u},X),\ldots, g_{ie}(\bar{u},X)\}$ is a \gr basis of $I(\bar{u})$.
  	\end{enumerate}
  \end{theorem}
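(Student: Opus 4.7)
The plan is to leverage the existence of a minimal comprehensive Gröbner system together with the Finiteness Theorem, translating the uniformity of leading monomials across a branch into a uniform zero-dimensionality check.

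First I would invoke the algorithm of Kapur et al.\ to compute a minimal comprehensive Gröbner system $\mathcal{G}=\{(E_1',N_1',G_1'),\ldots,(E_r',N_r',G_r')\}$ of $I$ on $L^m$. By the definition of a minimal CGS, for each index $i$ and every $\bar{u}\in \V_L(E_i')\setminus\V_L(N_i')$, the specialization $G_i'(\bar{u})$ is a Gröbner basis of $I(\bar{u})$, and moreover the leading coefficients (with respect to $X$) of the polynomials in $G_i'$ do not vanish on the branch. Consequently the leading monomial $\mathrm{LM}_X(g)$ is the same polynomial in $X$ for every specialization $\bar{u}$ in a fixed branch.

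Next I would apply the Finiteness Theorem on each branch. Since the set of leading monomials $\{\mathrm{LM}_X(g):g\in G_i'\}$ is independent of $\bar{u}\in \V_L(E_i')\setminus \V_L(N_i')$, the condition ``for each $1\le j\le n$ there exists $g\in G_i'$ with $\mathrm{LM}_X(g)=x_j^{m_j}$ for some $m_j\ge 0$'' either holds for \emph{every} $\bar{u}$ in the branch or for \emph{none}. Thus each branch is either \emph{entirely zero-dimensional} or \emph{entirely not zero-dimensional}, and this dichotomy is decided by inspecting the finite set of leading monomials of $G_i'$.

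Finally I would collect precisely those branches for which the Finiteness criterion is satisfied, relabel them as $(E_1,N_1,G_1),\ldots,(E_s,N_s,G_s)$, and verify the two claimed properties. Claim (2) is immediate from the CGS property. For Claim (1), the ``only if'' direction is immediate because $\bigcup_{i=1}^{r}\V_L(E_i')\setminus\V_L(N_i')=L^m$, so every $\bar{u}$ lies in some branch; if $I(\bar{u})$ is zero-dimensional, the Finiteness Theorem forces the leading monomials on that branch to meet the criterion, and the branch was retained. The ``if'' direction is the Finiteness Theorem applied to $G_i(\bar{u})$. The only mild obstacle is ensuring that the uniformity of leading monomials really holds; this is guaranteed by working with a \emph{minimal} CGS, where the non-vanishing of leading coefficients over the branch is built into the definition, so no additional partitioning of parameter space is required at this step.
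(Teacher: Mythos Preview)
Your proposal is correct and follows essentially the same approach as the paper: compute a minimal comprehensive Gr\"{o}bner system, use the invariance of leading monomials on each branch (guaranteed by the non-vanishing of leading coefficients in a minimal CGS), and select exactly those branches on which the Finiteness Theorem criterion $k[x_j]\cap \mathrm{LM}_X(G_i)\neq\emptyset$ for all $j$ holds. The paper's proof is terser, but your more explicit justification of why zero-dimensionality is constant along a branch is exactly the reasoning it relies on implicitly.
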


 \begin{proof}
 We only need to compute the minimal comprehensive \gr system $\mathcal{G}=\{(E_{1},N_1,G_{1}),$ $\ldots,(E_{s_0}, N_{s_0},G_{s_0}\}$ for $I$. Choose the branches $\left\{\left(E_{i_1},N_{i_1},G_{i_1}\right), \ldots,\left(E_{i_{l}}, N_{i_{l}},G_{i_{l}}\right)\right\}$ which satisfy for each $1\leq j \leq n$, $k[x_j] \cap \text{LM}_X(G_{i_v}) \neq \emptyset$ $(v=1,\ldots,l)$. By Finiteness Theorem and the definition of minimal comprehensive \gr system, $\left\{\left(E_{i_1},N_{i_1},G_{i_1}\right), \ldots,\left(E_{i_{l}}, N_{i_{l}},G_{i_{l}}\right)\right\}$ satisfies the conclusion.
\end{proof}

 \begin{remark}\label{rem3.3}
 For each branch $(E_i,N_i,G_i)$ in Theorem \ref{thm3.2}, and for $\bar{u} \in \V_L(E_i) \backslash \V_L(N_i)$,
  $${\rm{LM}}(G_i(\bar{u}))=\{{\rm{LM}}(g_{i1}(\bar{u},X)),\ldots,{\rm{LM}}(g_{ie}(\bar{u},X))\}.$$
 Therefore, for all $\bar{u} \in \V_L(E_i) \backslash \V_L(N_i)$, $L[X]/I(\bar{u})$ shares the same basis:
	$$B_i=\{X^{\alpha_1},\;\ldots,\;X^{\alpha_r}|\;X^{\alpha_j} \notin \langle {\rm{LM}}_{X}(G_i) \rangle, j=1,\ldots,r\},$$
	where ${\rm{LM}}_{X}(G)$ is a set of the leading monomials of all elements in $G$ w.r.t. $X$.
 \end{remark}

\subsubsection{Partition according to the number of zeros}

 Actually, Step 2 is a preparation for Step 3. For non-parametric cases, $t \in k[X]$ is a separating element of $I$ if and only if $\deg(\overline{\mathcal{X}_{t}}) = \text{rank}(Q_1)$, where $\overline{\mathcal{X}_{t}}=\mathcal{X}_{t} / \gcd \left(\mathcal{X}_{t}^{\prime}, \mathcal{X}_{t}\right)$. As for parametric cases, $Q_1$ and $\mathcal{X}_{t}$ are both with parameters. Therefore, choosing a $t \in k[X]$,  it should discuss when dose $\deg(\overline{\mathcal{X}_{t}}) = \text{rank}(Q_1)$ hold with the change of the parameters. First, we need to determine $\text{rank}(Q_1)$, equally the number of zeros for $I$, under parametric specializations.

 Let $(E,N,G)$ be one of the branches obtained in Theorem \ref{thm3.2} and $G=\{g_1,\ldots,g_e\} \subset k[U,X]$. Then $(E,N,G)$ satisfies that for any $\bar{u} \in \V_L(E)\backslash \V_L(N)$,
 \begin{enumerate}
		\item $I(\bar{u})$ is zero-dimensional and $G(\bar{u})$ is a \gr basis of $I(\bar{u})$;
		\item for each $g\in G$, $c_1(\bar{u})\neq 0$, where $c_1$ is the leading coefficient of $g$ w.r.t. $X$.
 \end{enumerate}

 Assume that $B=\{X^{\alpha_1},\ldots,X^{\alpha_r}\}$ be a basis of $L[X]/I(\bar{u})$. For $t \in k[X]$, let $R_i$ be a remainder of $t\cdot X^{\alpha_1}$ on division by $G$ in $k(U)[X]$, i.e.,
     $$t\cdot X^{\alpha_i}=\sum_{j=1}^{e}q_{ij}g_j+R_i,\;i=1,\ldots,r.$$
 where $q_{ij},R_i \in k(U)[X]$ and their denominators are the products of some factors of the leading coefficient of $g_j\in G$  w.r.t. $X$. So they are not equal to zero under $\bar{u} \in \V_L(E)\backslash \V_L(N)$. Therefore, we have
     $$t\cdot X^{\alpha_i}=\sum_{j=1}^{e}q_{ij}(\bar{u},X)g_j(\bar{u},X)+R_i(\bar{u},X),\;i=1,\ldots,r.$$
 The remainder of $t\cdot X^{\alpha_1}$ on division by $G(\bar{u})$ in $L[X]$ is equal to $R_i(\bar{u},X)$. Let $M_t^I \in k(U)^{r \times r}$ be the matrix satisfying
$$(R_1,\ldots,R_r)=B \cdot M_t^I,$$
 For any $\bar{u} \in \V_L(E)\backslash \V_L(N)$,
$$(R_1(\bar{u},X),\ldots,R_r(\bar{u},X))=B \cdot M_t^I(\bar{u}).$$
 Assume that $M_t^{I(\bar{u})}$ is the multiplication matrix of $t$ w.r.t. $B$ in quotient ring $L[X]/I(\bar{u})$. Then $M_t^I(\bar{u})=M_t^{I(\bar{u})}$. Therefore, we obtain the following lemma.

 \begin{lemma}\label{lem3.4}
 Let $(E,N,G)$, $B$ and $M^I_t \in k(U)^{r \times r}$ be as defined above. For any $\bar{u} \in \V_L(E)\backslash \V_L(N)$,
 \begin{enumerate}
 \item The multiplication matrix $M_t^{I(\bar{u})}$ of $t$ w.r.t. $B$ in quotient ring $L[X]/I(\bar{u})$ is equal to $M^I_t(\bar{u})$.
 \item Assume that the characteristic polynomial of $M^I_t$ over $k(U)$ is $\mathcal{X}_{t}(U,T)$, then the characteristic polynomial of $M^{I(\bar{u})}_t$ over $L$ is $\mathcal{X}_{t}(\bar{u},T)$.
 \item Let $h \in k[X]$,  $Q_h^I \triangleq (\text{Tr}(M^I_{hX^{\alpha_i}X^{\alpha_j}}))_{1\leq i,j\leq r}$ be a $r \times r$ matrix over $k(U)$. Then the Hermite's quadratic form associated $h$ w.r.t. $B$ in quotient ring $L[X]/I(\bar{u})$ is equal to $Q_h^I(\bar{u}).$
 \end{enumerate}
 \end{lemma}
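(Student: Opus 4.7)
The plan is to prove the three parts sequentially, with part (1) being the substantive claim and parts (2) and (3) following from the observation that determinant, trace, and matrix multiplication all commute with the ring homomorphism of specialization at any $\bar{u}$ where the denominators appearing in $M_t^I$ do not vanish. Throughout, the crucial structural fact I will exploit is that the branch $(E,N,G)$ is taken from the minimal comprehensive Gr\"obner system: this forces $\mathrm{LC}_X(g_j)(\bar{u}) \neq 0$ on $\V_L(E)\backslash \V_L(N)$, which is exactly what we need to ensure well-defined specialization.

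For part (1), I would start from the parametric division identity $t \cdot X^{\alpha_i} = \sum_{j=1}^{e} q_{ij} g_j + R_i$ in $k(U)[X]$, produced by running the division algorithm over the fraction field $k(U)$ with $G$ as the divisor list. The denominators of $q_{ij}$ and $R_i$ are products of powers of the leading coefficients $\mathrm{LC}_X(g_j)$, all of which are nonvanishing on $\V_L(E)\backslash \V_L(N)$. Hence I may substitute $U = \bar{u}$ on both sides and obtain a valid identity in $L[X]$. By Theorem \ref{thm3.2} the set $G(\bar{u})$ is a Gr\"obner basis of $I(\bar{u})$, and by Remark \ref{rem3.3} the monomials of $B$ are exactly the standard monomials of that Gr\"obner basis; uniqueness of normal form then forces $R_i(\bar{u},X)$ to be the normal form of $t \cdot X^{\alpha_i}$ modulo $I(\bar{u})$. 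Reading the coordinates of $R_i(\bar{u},X)$ in the basis $B$ yields both the $i$th column of $M_t^{I(\bar{u})}$ by definition and the $i$th column of $M_t^I(\bar{u})$ by the defining identity $(R_1,\ldots,R_r)=B\cdot M_t^I$ specialized at $\bar{u}$; equating gives the claim.

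For part (2), the characteristic polynomial is $\mathcal{X}_t(U,T)=\det(T\cdot I_r - M_t^I)\in k(U)[T]$, with coefficients whose denominators still divide powers of the $\mathrm{LC}_X(g_j)$. Since $\det$ is a polynomial function of matrix entries and specialization is a ring homomorphism on the relevant localization of $k[U]$, one has $\mathcal{X}_t(\bar{u},T)=\det(T\cdot I_r - M_t^I(\bar{u}))$, and by part (1) this equals $\det(T\cdot I_r - M_t^{I(\bar{u})})$, the characteristic polynomial of $M_t^{I(\bar{u})}$ over $L$. For part (3), the key auxiliary fact is that the assignment $t \mapsto M_t^I$ is a $k(U)$-algebra homomorphism, i.e. $M_{t_1 t_2}^I = M_{t_1}^I \cdot M_{t_2}^I$, because it arises from the action of $k(U)[X]$ by multiplication on the free $k(U)$-module with basis $B$. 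Combining this with part (1) gives $M_{hX^{\alpha_i}X^{\alpha_j}}^I(\bar{u}) = M_{hX^{\alpha_i}X^{\alpha_j}}^{I(\bar{u})}$, and since trace commutes with specialization the $(i,j)$-entry of $Q_h^I(\bar{u})$ equals $\mathrm{Tr}(M_{hX^{\alpha_i}X^{\alpha_j}}^{I(\bar{u})})$, which is precisely $S_h(X^{\alpha_i},X^{\alpha_j})$ in $L[X]/I(\bar{u})$.

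The main obstacle is part (1): everything rests on the parametric division producing quotients and a remainder whose denominators are controlled by $\mathrm{LC}_X(g_j)$, and on the specialized remainder actually being the normal form modulo $I(\bar{u})$. Both points hinge on the minimality hypothesis for the comprehensive Gr\"obner system, so I would make the denominator-control step explicit (tracking it through the division algorithm) rather than treating it as obvious. Once that is in place, parts (2) and (3) are essentially bookkeeping about the compatibility of $\det$, $\mathrm{Tr}$, and matrix product with the specialization homomorphism.
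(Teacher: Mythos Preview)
Your proposal is correct and follows essentially the same approach as the paper: the paper establishes part (1) in the discussion immediately preceding the lemma (denominator control via the leading coefficients of $G$, specialization of the division identity, and identification of $R_i(\bar{u},X)$ as the normal form), and then states the lemma as a summary without further proof. Your treatment of parts (2) and (3) via commutation of $\det$ and $\mathrm{Tr}$ with specialization is more explicit than the paper's, which simply leaves these as immediate consequences.
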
	

  By the above analysis, first we compute $Q_1^I \in k(U)^{r \times r}$ defined above. For all $\bar{u}\in \V_L(E)\backslash \V_L(N)$, $Q_1^{I(u)}=Q_1^{I}(u)$ by Lemma \ref{lem3.4} and the number of points in $\V_L(I(\bar{u}))$ is equal to $\text{rank}(Q_1^{I(u)})$ by Theorem \ref{thm2}.

 \begin{theorem}\label{thm22}
	Let $(E,N,G)$ be one of the branches in Theorem \ref{thm3.2}. Then there exists a finite set
	$$\{(E_1,N_1,k_1),\ldots,(E_s,N_s,k_s)\}, ~E_i, ~N_i \subset k[U],\;k_i \in \mathbb{Z}_{>0},$$
	such that $\V_L(E) \backslash \V_L(N)=\cup_{i=1}^s\V_L(E_i) \backslash \V_L(N_i)$ and for each $i \in \{1,\ldots,s\}$ and all $\bar{u} \in \V_L(E_i)\backslash \V_L(N_i)$, the number of points in $\V_L(I(\bar{u}))$ is equal to $k_i$.
 \end{theorem}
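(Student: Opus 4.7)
The plan is to use Lemma \ref{lem3.4} to construct a single Hermite matrix over $k(U)$ whose specialization recovers the ordinary Hermite form at each parameter value, and then to partition $\V_L(E)\backslash\V_L(N)$ according to the rank of that matrix.

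First I would fix the monomial basis $B=\{X^{\alpha_1},\ldots,X^{\alpha_r}\}$ coming from Remark \ref{rem3.3} and assemble $Q_1^I = (\text{Tr}(M^I_{X^{\alpha_i}X^{\alpha_j}}))_{1\le i,j\le r} \in k(U)^{r\times r}$. The denominators appearing in the entries of $Q_1^I$ are products of factors of the leading coefficients (w.r.t.\ $X$) of the elements of $G$, and these factors do not vanish on $\V_L(E)\backslash\V_L(N)$ thanks to the minimality conditions built into $(E,N,G)$. By Lemma \ref{lem3.4}(3), for every $\bar{u}\in\V_L(E)\backslash\V_L(N)$ the specialization $Q_1^I(\bar{u})$ agrees with the Hermite quadratic form of $I(\bar{u})$, and by Theorem \ref{thm2} the number of points of $\V_L(I(\bar{u}))$ equals $\mathrm{rank}(Q_1^I(\bar{u}))$.

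Next I would partition the base according to the value $k := \mathrm{rank}(Q_1^I(\bar{u}))\in\{0,1,\ldots,r\}$. Using the standard minor characterization of rank, $\mathrm{rank}(Q_1^I(\bar{u}))\le k$ is equivalent to the simultaneous vanishing at $\bar{u}$ of every $(k+1)\times(k+1)$ minor of $Q_1^I$, and the rank equals $k$ exactly when in addition some $k\times k$ minor is nonzero at $\bar{u}$. After clearing the common denominator of $Q_1^I$ (which does not vanish on our base), each minor becomes an element of $k[U]$. Concretely, for each $k\in\{1,\ldots,r\}$ I would set
\[
\tilde{E}_k = E \cup \{\text{all }(k{+}1)\times(k{+}1)\text{ minors of }Q_1^I\}, \quad \tilde{N}_k = \{\, gh : g\in N,\ h \text{ a }k\times k\text{ minor of }Q_1^I\,\},
\]
discard any $k$ for which $\V_L(\tilde{E}_k)\backslash\V_L(\tilde{N}_k)$ is empty (in particular the $k=0$ case, since the statement asks for $k_i>0$), and relabel the surviving triples $(\tilde{E}_k,\tilde{N}_k,k)$ as $(E_i,N_i,k_i)$.

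The verification then splits into two directions inside $\V_L(E)\backslash\V_L(N)$: the condition $\bar{u}\in\V_L(\tilde{E}_k)$ forces every $(k{+}1)$-minor to vanish and hence $\mathrm{rank}(Q_1^I(\bar{u}))\le k$; and given $\bar{u}\notin\V_L(N)$, the condition $\bar{u}\notin\V_L(\tilde{N}_k)$ is equivalent to the existence of some $k\times k$ minor nonvanishing at $\bar{u}$, which forces $\mathrm{rank}(Q_1^I(\bar{u}))\ge k$. Combined with Lemma \ref{lem3.4} and Theorem \ref{thm2}, this yields $\#\V_L(I(\bar{u}))=k_i$ on each branch, while the union covers $\V_L(E)\backslash\V_L(N)$ because every point has a well-defined rank in $\{0,\ldots,r\}$. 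The main technical hurdle I expect is this last piece of bookkeeping, namely checking that the ``some minor nonzero'' condition is correctly encoded by the $N\times\{\cdot\}$ construction without disturbing the base constraint $\bar{u}\notin\V_L(N)$; once the minor/rank description is in place, everything else is routine.
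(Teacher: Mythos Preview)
Your argument is correct, but it takes a different route from the paper's proof in how the rank stratification is carried out. After clearing denominators to obtain $Q\in k[U]^{r\times r}$, the paper does not enumerate minors; instead it introduces auxiliary variables $Z=\{z_1,\ldots,z_r\}$, forms the linear polynomials $f_i=\sum_{j} f_{ij}z_j$, and computes a minimal comprehensive \gr system of $\{f_1,\ldots,f_r\}$ on $\V_L(E)\backslash\V_L(N)$ with $U$ as parameters. By the results of \cite{Ma17} quoted there, on each resulting branch $(E_i,N_i,G_i)$ the rank of $Q(\bar u)$ equals the number of elements of $G_i$, so $k_i$ is simply read off as $|G_i|$. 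Your minor-based description is more elementary and self-contained (it does not rely on \cite{Ma17}), and the product-set encoding $\tilde N_k=\{gh:g\in N,\ h\ \text{a}\ k\times k\ \text{minor}\}$ does exactly what you want, since for $\bar u\notin\V_L(N)$ one has $\bar u\notin\V_L(\tilde N_k)$ iff some $k\times k$ minor is nonzero at $\bar u$. The trade-off is size: $\tilde E_k$ and $\tilde N_k$ carry $\binom{r}{k+1}^{2}$ and $|N|\cdot\binom{r}{k}^{2}$ polynomials respectively, whereas the CGS route produces a handful of branches and is what the paper actually implements in Step~2 of both algorithms. One small point worth tightening: your dismissal of the $k=0$ stratum (``since the statement asks for $k_i>0$'') reads as circular. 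The correct justification is that on any branch coming from Theorem~\ref{thm3.2} the ideal $I(\bar u)$ is proper, hence $1\in B$ and the $(1,1)$ entry of $Q_1^I$ equals $\mathrm{Tr}(M_1)=r\neq 0$; thus $\mathrm{rank}\,Q_1^I(\bar u)\ge 1$ for every $\bar u\in\V_L(E)\backslash\V_L(N)$, and the $k=0$ stratum is genuinely empty.
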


\begin{proof}
    We first compute $Q_1^I \in k(U)^{r \times r}$ as defined above. It suffices to discuss $\text{rank}(Q_1^I(\bar{u}))$ for $\bar{u} \in \V_L(E)\backslash \V_L(N)$. Let $D$ is the least common multiple of all denominators of entries in $Q_1^I$. Then $D \in k[U]$ and $D(\bar{u}) \neq 0$ for any $\bar{u} \in \V_L(E)\backslash \V_L(N)$. Let $Q=D\cdot Q_1^I \in k[U]^{r \times r}$. It suffices to discuss $\text{rank}(Q(\bar{u}))$.

    Assume that $Q=(f_{ij})_{1\leq i,j\leq r}$. Using the method in \cite{Ma17}, we construct polynomial
    $f_i=\sum_{j=1}^rf_{ij}z_j \in k[U][Z], Z=\{z_1,\ldots,z_r\}$, $i=1,\ldots,r$. Compute the minimal comprehensive \gr system $\mathcal{G}=\{(E_1,N_1,G_1),\ldots,(E_s,N_s,G_s)\}$ of $F=\{f_1,\ldots,f_r\}$ where $U$ is the parameters. By Theorem 1 and Theorem 2 in \cite{Ma17}, for any $\bar{u} \in \V_L(E_i) \backslash \V_L(N_i)$, the rank of $Q(\bar{u})$ is equal to the number of elements in $G_i$. Let $k_i$ denote the number of elements in $G_i$, $\{(E_1,N_1,k_1),\ldots,(E_s,N_s,k_s)\}$ satisfies the conditions we need. 	
\end{proof}

\subsubsection{Partition according to separating elements}
 Choosing the separating element is the key to obtain the rational univariate representation. In the following, we will analyze each branch in Theorem \ref{thm22}. Assume that for all $\bar{u} \in \V_L(E) \backslash \V_L(N)$, the number of points in $\V_L(I(\bar{u}))$ is the same, denoted by $k_0$.

 If $k_0=1$, we can choose $t=1$ which is a separating element of $I(\bar{u})$ for $\bar{u} \in \V_L(E) \backslash \V_L(N)$. Otherwise, choose randomly a $t \in \{x_{1}+i x_{2}+\cdots+i^{n-1} x_{n}, i=1 \ldots n k_0(k_0-1) / 2\}$. $\V_L(E) \backslash \V_L(N)$ can be divided into two parts, $\V_L(E_i) \backslash \V_L(N_i)$, $i=1,2$, where
\begin{enumerate}
	\item $t$ is a separating element of $I(\bar{u})$ for $\bar{u} \in \V_L(E_1) \backslash \V_L(N_1)$.
	\item $t$ is not a separating element of $I(\bar{u})$ for $\bar{u} \in \V_L(E_2) \backslash \V_L(N_2)$.
\end{enumerate}

We will prove in fact the second part is "small enough" in numbers of points. Thus we first find a separating element for the most part of $\V_L(E) \backslash \V_L(N)$.  For the small part, we will choose another $t \in k[X]$. Sequentially, it does a recursive procedure.

\begin{theorem}\label{thm33}
Let $I=\langle f_1(U,X),\ldots,f_l(U,X)\rangle \subset k[U,X]$, $E,N \subset k[U]$, $G \subset k[U,X]$. Assume that for all $\bar{u} \in \V_L(E)\backslash \V_L(N)$,
\begin{enumerate}
	\item $I(\bar{u})$ is zero-dimensional and $G(\bar{u})$ is a \gr basis of $I(\bar{u})$;
	\item for each $g\in G$, $c_1(\bar{u})\neq 0$, where $c_1$ is the leading coefficient of $g$ w.r.t. $X$;
	\item the number of points in $\V_L(I(\bar{u}))$ is equal to $k_0\neq 1$.
\end{enumerate}
 Suppose that $t \in k[X]$, $\mathcal{X}_t(U,T)\in k(U)[T]$ is the characteristic polynomial as defined in Lemma \ref{lem3.4}. Let $\mathcal{X}(U,T)\in k[U][T]$ (briefly, $\mathcal{X}$)be the product of $\mathcal{X}_t(U,T)$ and the least common multiple of all denominators of $\mathcal{X}_t(U,T)$.  Then $t$ is separating element of $I(\bar{u})$ if and only if $\bar{u} \in \V_L(E)\backslash \V_L(N \times \{\operatorname{PSC}_{d}(\mathcal{X},\mathcal{X}^\prime)\})\neq\emptyset$.
\end{theorem}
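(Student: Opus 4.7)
The plan is to reduce the separating-element condition on $t$ to a single polynomial nonvanishing condition by threading Theorem~\ref{thm1} through the subresultant machinery of Lemma~\ref{cor1}. First, for every $\bar{u}\in\V_L(E)\backslash\V_L(N)$, Lemma~\ref{lem3.4} identifies $\mathcal{X}_t(\bar{u},T)$ with the characteristic polynomial of the multiplication map on $L[X]/I(\bar{u})$, so Theorem~\ref{thm1} gives
\[
\mathcal{X}_t(\bar{u},T)=\prod_{p\in\V_L(I(\bar{u}))}(T-t(p))^{\mu(p)}.
\]
Writing $r$ for $\deg_T\mathcal{X}_t$ (constant on the branch since the characteristic polynomial is monic in $T$), the degree of the squarefree part of $\mathcal{X}_t(\bar{u},T)$ equals the number of distinct values $t$ takes on $\V_L(I(\bar{u}))$, which is bounded above by $k_0$ with equality precisely when $t$ is a separating element. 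Setting $d:=r-k_0$, the separating condition becomes $\deg\bigl(\gcd(\mathcal{X}_t(\bar{u},T),\mathcal{X}_t'(\bar{u},T))\bigr)=d$.

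The crucial observation I would then use is that the lower bound $\deg(\gcd)\geq d$ holds automatically on the entire branch, because the squarefree part always has degree at most $k_0$ (by the factorization above). Consequently, the equality $\deg(\gcd)=d$ is equivalent to the one-sided upper bound $\deg(\gcd)\leq d$. By the subresultant theorem~\ref{thm3}, this upper bound in turn holds iff some $\operatorname{PSC}_j$ with $j\leq d$ does not vanish at $\bar{u}$. But the automatic lower bound $\deg(\gcd)\geq d$ already forces $\operatorname{PSC}_j(\bar{u})=0$ for every $j<d$, so the whole equivalence collapses to the single inequality $\operatorname{PSC}_d(\mathcal{X},\mathcal{X}')(\bar{u})\neq 0$.

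It remains to transfer from $\mathcal{X}_t\in k(U)[T]$ to its polynomial clearing $\mathcal{X}=D\cdot\mathcal{X}_t\in k[U][T]$. Since $D$ is $T$-independent and, being built from leading-coefficient factors of $G$, satisfies $D(\bar{u})\neq 0$ on the branch, we have $\mathcal{X}'=D\cdot\mathcal{X}_t'$, the degrees of the specialized gcds coincide, and the parametric subresultants of $(\mathcal{X},\mathcal{X}')$ differ from those of $(\mathcal{X}_t,\mathcal{X}_t')$ only by nonzero powers of $D$. Rephrasing the conclusion set-theoretically then gives exactly $\bar{u}\in\V_L(E)\backslash\V_L\bigl(N\times\{\operatorname{PSC}_d(\mathcal{X},\mathcal{X}')\}\bigr)$, where nonemptiness is witnessed by the fact that a separating element does exist in the specified family on a sufficiently generic subbranch (an appeal to the non-parametric result of Rouillier applied to a generic specialization).

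The main obstacle I anticipate is establishing the automatic lower bound $\deg(\gcd)\geq d$ uniformly over the branch rather than only at separating specializations; this forces one to rely on the specific factorization supplied by Theorem~\ref{thm1}, not on any generic property of gcds of parametric polynomials. A secondary but non-trivial technical step is the denominator bookkeeping linking $\mathcal{X}_t$ and $\mathcal{X}$, which must be done carefully to ensure that all degree and subresultant statements transfer correctly between $k(U)[T]$ and $k[U][T]$ without introducing spurious vanishing loci.
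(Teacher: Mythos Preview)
Your proposal is correct and follows essentially the same route as the paper: use Theorem~\ref{thm1} (via Lemma~\ref{lem3.4}) to get the factorization of $\mathcal{X}_t(\bar u,T)$, deduce the automatic lower bound $\deg(\gcd)\geq d$ on the whole branch, and then invoke the subresultant theorem so that the equality $\deg(\gcd)=d$ collapses to the single condition $\operatorname{PSC}_d(\mathcal{X},\mathcal{X}')(\bar u)\neq 0$. The paper also explicitly notes that $k_0\neq 1$ is what guarantees $d<\deg_T(\mathcal{X}')$, so that Theorem~\ref{thm3} applies at index $d$; you should make this check explicit. One caution: your appeal to Rouillier for the ``$\neq\emptyset$'' clause overreaches, since for a \emph{fixed} $t$ the set may well be empty (indeed the paper's Theorem~\ref{thm44} handles exactly that case by discarding such $t$); the paper does not prove nonemptiness in this theorem either, so treat that clause as part of the equivalence rather than as a separate claim to establish.
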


\begin{proof}
 By Definition \ref{def1} and Theorem \ref{thm1}, for $\bar{u} \in \V_L(E) \backslash \V_L(N)$, $t$ is a separating element of $I(\bar{u})$ if and only
 $$\deg\left(\frac{\mathcal{X}(\bar{u},T)}{\gcd (\mathcal{X}^{\prime}(\bar{u},T), \mathcal{X}(\bar{u},T))} \right)=k_0,$$
 which is equivalent to
 $$\deg(\gcd \left(\mathcal{X}^{\prime}(\bar{u},T), \mathcal{X}(\bar{u},T)\right))=\deg (\mathcal{X}(\bar{u},T)) -k_0.$$
 Since $\deg_T (\mathcal{X})=\deg(\mathcal{X}(\bar{u},T))$, the above equation becomes
 $$\deg(\gcd \left(\mathcal{X}^{\prime}(\bar{u},T), \mathcal{X}(\bar{u},T)\right))=\deg_T (\mathcal{X}) -k_0.$$
 Let $d=\deg_T (\mathcal{X})-k_0$. By Theorem \ref{thm1}, for any $\bar{u} \in \V_L(E) \backslash \V_L(N)$, $\mathcal{X}(\bar{u},T)$ has at most $k_0$ distinct roots.
 It implies that
	$$\deg(\gcd \left(\mathcal{X}^{\prime}(\bar{u},T), \mathcal{X}(\bar{u},T)\right))\geq d.$$
 Since $k_0\neq 1$, $\deg_T(\mathcal{X}')=\deg_T (\mathcal{X})-1>\deg_T (\mathcal{X})-k_0=d$. By Theorem \ref{thm3} and the proof of Lemma \ref{cor1},
 $$\deg(\gcd \left(\mathcal{X}^{\prime}(\bar{u},T), \mathcal{X}(\bar{u},T)\right))\leq d \iff  \operatorname{PSC}_{d}(\mathcal{X},\mathcal{X}^\prime)(\bar{u})\neq 0.$$
 Let $N_1=N \times \{\operatorname{PSC}_{d}(\mathcal{X},\mathcal{X}^\prime)\}$. Then,
	$$\deg(\gcd \left(\mathcal{X}^{\prime}(\bar{u},T), \mathcal{X}(\bar{u},T)\right))= d \iff  \bar{u} \in \V_L(E)\backslash \V_L(N_1)\neq\emptyset.$$
 Thus, $t$ is separating element of $I(\bar{u})$ if and only if $\bar{u} \in \V_L(E)\backslash \V_L(N \times \{\operatorname{PSC}_{d}(\mathcal{X},\mathcal{X}^\prime)\})\neq\emptyset$.
\end{proof}

 \begin{remark}
 \begin{enumerate}
 \item For non-parametric cases, choosing $t$ from $S$ randomly, $t$ is a separating element of some zero-dimensional ideal with probability 1. Therefore, it is easy to choose a $t$ satisfying Theorem \ref{thm33}.
 \item For $E,N\in k[U]$, we can use the method in \cite{Kapur13} to check whether $\V_L(E)\backslash \V_L(N)$ $ = \emptyset$.
 \end{enumerate}
 \end{remark}

\begin{theorem}\label{thm44}
Let $I=\langle f_1(U,X),\ldots,f_l(U,X)\rangle \subset k[U,X]$, $E,N \subset k[U]$, $G \subset k[U,X]$. Assume that for all $\bar{u} \in \V_L(E)\backslash \V_L(N)$,
\begin{enumerate}
	\item $I(\bar{u})$ is zero-dimensional and $G(\bar{u})$ is a \gr basis of $I(\bar{u})$;
	\item for each $g\in G$, $c_1(\bar{u})\neq 0$, where $c_1$ is the leading coefficient of $g$ w.r.t. $X$;
	\item the number of points in $\V_L(I(\bar{u}))$ is equal to $k_0$.
\end{enumerate}
Then there is a finite set
\[\{(E_1,N_1,t_1),\ldots,(E_s,N_s,t_s)\},~E_i,N_i \subset k[U],\;t_i \in k[X],\]
such that $\V_L(E) \backslash \V_L(N)=\cup_{i=1}^s\V_L(E_i) \backslash \V_L(N_i)$ and for each $i \in \{1,\ldots,s\}$,  $t_i$ is a separating element of $I(\bar{u})$ for $\bar{u} \in \V_L(E_i)\backslash \V_L(N_i)$.
\end{theorem}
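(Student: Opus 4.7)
The plan is to argue by a recursive procedure that terminates via Noetherian induction on the Zariski-closed set $\V_L(E)$. For the base case $k_0=1$, any constant, say $t_1=1$, trivially separates the single point in $\V_L(I(\bar u))$, so the whole branch is covered by the single piece $(E,N,1)$.

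For $k_0\geq 2$ I would proceed as follows. Pick a candidate $t\in S=\{x_1+ix_2+\cdots+i^{n-1}x_n:i=1,\ldots,nk_0(k_0-1)/2\}$, build the characteristic polynomial $\mathcal{X}(U,T)\in k[U][T]$ as in Theorem \ref{thm33}, and form $N_1=N\times\{\operatorname{PSC}_d(\mathcal{X},\mathcal{X}')\}$ with $d=\deg_T(\mathcal{X})-k_0$. By Theorem \ref{thm33}, $t$ is a separating element of $I(\bar u)$ precisely on $\V_L(E)\backslash \V_L(N_1)$, so I would record the triple $(E,N_1,t)$ as one output piece (whenever it is non-empty) and then recurse on the complement $\V_L(E\cup\{\operatorname{PSC}_d(\mathcal{X},\mathcal{X}')\})\backslash\V_L(N)$. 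The three hypotheses of the theorem propagate automatically to this smaller set: $G(\bar u)$ remains a \gr basis of $I(\bar u)$, the leading coefficients of elements of $G$ remain nonzero, and the number of zeros of $I(\bar u)$ is still $k_0$, since all three conditions are pointwise and the complement is a subset of the original branch.

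Termination is the crux. At each recursive step the defining set $E$ grows by one polynomial, and the corresponding variety $\V_L(E)$ shrinks strictly provided the chosen $t$ separates at least one point of the current branch. To guarantee strict shrinkage I would invoke Rouillier's result that $S$ contains a separator of every zero-dimensional ideal with $k_0$ zeros: fixing any $\bar u_0$ in the current branch, some $t\in S$ separates $I(\bar u_0)$, and that same $t$ then has non-empty good set, so one can always select a $t$ producing a strictly smaller residual branch (if a chosen $t$ happens to be nowhere-separating, one simply tries the next candidate in $S$). The resulting descending chain of varieties $\V_L(E)\supsetneq \V_L(E\cup\{\operatorname{PSC}_d^{(1)}\})\supsetneq \V_L(E\cup\{\operatorname{PSC}_d^{(1)},\operatorname{PSC}_d^{(2)}\})\supsetneq\cdots$ must stabilize by the Noetherianity of $k[U]$, so the recursion halts after finitely many iterations. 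Collecting the triples produced at every level yields the required finite decomposition of $\V_L(E)\backslash\V_L(N)$.

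The main obstacle is the termination argument: guaranteeing that at each step a suitable $t$ can be chosen so that the descending chain is genuinely strict, which forces one to combine Rouillier's pointwise existence result with a uniform-over-$\bar u$ choice on the current branch. A secondary care point is checking that the hypotheses needed to apply Theorem \ref{thm33} recursively do propagate to the Zariski-closed subset cut out by the previous $\operatorname{PSC}_d$'s; this is essentially automatic since the conditions in Theorem \ref{thm3.2}, Lemma \ref{lem3.4}, and Theorem \ref{thm22} are all preserved under restriction to subsets of $\V_L(E)\backslash\V_L(N)$.
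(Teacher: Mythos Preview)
Your proposal is correct and follows essentially the same strategy as the paper: split off the locus where a candidate $t\in S$ separates via $\operatorname{PSC}_d$, recurse on the complement, and terminate by the ascending chain condition on the ideals $\langle E\rangle\subsetneq\langle E\cup\{\operatorname{PSC}_d^{(1)}\}\rangle\subsetneq\cdots$ (equivalently, your descending chain of varieties). Your justification that some $t\in S$ always yields a non-empty good locus---by fixing a point $\bar u_0$ and invoking Rouillier's pointwise existence result---is in fact more explicit than the paper's own proof, which leaves this step to a preceding remark.
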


\begin{proof}
 If $k_0=1$, then $t_1=1$, $E_{1}=E$, $N_{1}=N$ and it is done. When $k_0\neq1$, we choose $t\in S=\{x_{1}+i x_{2}+\cdots+i^{n-1} x_{n}, i=1, \ldots, n k_0(k_0-1) / 2\}$. By Theorem \ref{thm33}, if $\V_L(E)\backslash \V_L(N \times \{\operatorname{PSC}_{d}(\mathcal{X},\mathcal{X}^\prime)\})$ $=\emptyset$, pick another $t$ from $S$; otherwise, let $t_1=t, E_1=E, N_1=N \times \{\operatorname{PSC}_{d}(\mathcal{X},\mathcal{X}^\prime)\}, E_{10}=E \cup \{\operatorname{PSC}_{d}(\mathcal{X},\mathcal{X}^\prime)\}, N_{10}=N$. Then
 $$\V_L(E)\backslash \V_L(N)=\V_L(E_{1}) \backslash \V_L(N_{1}) \cup \V_L(E_{10})\backslash \V_L(N_{10})$$
 and $t_1$ is a separating element of $I(\bar{u})$ for $\bar{u} \in \V_L(E_1)\backslash \V_L(N_1)$. If $\V_L(E_{10})\backslash \V_L(N_{10})=\emptyset$, it is done. Otherwise, it implies from $\V_L(E_{1}) \backslash \V_L(N_{1})\neq\emptyset$ that $\operatorname{PSC}_{d}(\mathcal{X},\mathcal{X}^\prime)\notin \langle E\rangle$. Thus $\langle E \rangle \subsetneq \langle E_{10} \rangle$.  Continually, using Theorem \ref{thm33} to $(E_{10}, N_{10})$ and choosing $t_2\in S$, we have
 $$\V_L(E_{10})\backslash \V_L(N_{10})=\V_L(E_2) \backslash \V_L(N_2) \cup \V_L(E_{20})\backslash \V_L(N_{20})$$
 and $t_2$ is a separating element of $I(\bar{u})$ for $\bar{u} \in \V_L(E_2)\backslash \V_L(N_2)$. If $\V_L(E_{20})\backslash \V_L(N_{20})=\emptyset$, it is done. Otherwise, $\langle E_{10} \rangle \subsetneq \langle E_{20} \rangle$. Go on like this, we get strictly ascending ideal chains:
 $$E \subsetneq E_{10} \subsetneq E_{20} \subsetneq E_{30} \subsetneq \cdots.$$
 According to the ascending chains condition, it must stop in finite steps. Thus, we obtain a finite set
$$\{(E_1,N_1,t_1),\ldots,(E_s,N_s,t_s)\},~E_i,N_i \subset k[U],\;t_i \in k[X],$$
such that $\V_L(E) \backslash \V_L(N)=\cup_{i=1}^s\V_L(E_i) \backslash \V_L(N_i)$ and for each $i \in \{1,\ldots,s\}$, $t_i$ is a separating element of $I(\bar{u})$ for $\bar{u} \in \V_L(E_i)\backslash \V_L(N_i)$.
\end{proof}

 \subsubsection{Partitions according to GCDs of $\mathcal{X}_t$ and $\mathcal{X}_t^{\prime}$}

 Suppose that $(E,N,t)$ is one of branches in Theorem \ref{thm44}. That is, $t$ is a separating element of $I(\bar{u})$ for any $\bar{u} \in \V_L(E)\backslash \V_L(N)$. We know if one obtain the RUR then you have to compute the GCD of $\mathcal{X}_t$ and $\mathcal{X}^\prime_t$. However, for different $\bar{u} \in \V_L(E)\backslash \V_L(N)$, the GCD of $\mathcal{X}_t(\bar{u},T)$ and $\mathcal{X}^\prime_t(\bar{u},T)$ may be different. In order to obtain the RUR with parameters, one needs to use parametric GCD algorithm \citep{Kapur18,Wang20,Wang2022}. $\V_L(E) \backslash \V_L(N)$ will be divided into finite branches. In each branch the GCD of $\mathcal{X}_t$ and $\mathcal{X}^\prime_t$ has the same expression, then by adjusting Formula (\ref{eq1}), (\ref{eq2}) in Section 2.1 to $k(U)[T]$, we can compute $g(U,T)$, $g_1(U,T)$, $\ldots$, $g_n(U,T)$ and obtain an RUR for each branch.

 \begin{theorem}\label{thm3.9}
 Let $I=\langle f_1(U,X),\ldots,f_l(U,X)\rangle \subset k[U,X]$, $E,N \subset k[U]$, $G \subset k[U,X]$, $t \in k[X]$. Assume that for all $\bar{u} \in \V_L(E)\backslash \V_L(N)$,
	\begin{enumerate}
		\item $I(\bar{u})$ is zero-dimensional and $G(\bar{u})$ is a \gr basis of $I(\bar{u})$;
		\item for each $g\in G$, $c_1(\bar{u})\neq 0$, where $c_1$ is the leading coefficient of $g$ w.r.t. $X$;
		\item $t$ is a separating element of $I(\bar{u})$.
	\end{enumerate}
 Let $\mathcal{X}_t(U,T)\in k(U)[T]$ be the characteristic polynomial as defined in Lemma \ref{lem3.4}. Then there is a finite set
$$\{(E_1,N_1,\mathcal{X}_t,g_1,g_{11},\ldots,g_{1n}),\ldots,(E_s,N_s,\mathcal{X}_t,g_s,g_{s1},\ldots,g_{sn})\},$$
such that $\V_L(E)\backslash \V_L(N)=\bigcup_{i=1}^{s}\V_L(E_i)\backslash \V_L(N_i)$ and for each $i \in \{1,\ldots,s\}$ and for each $\bar{u} \in \V_L(E_i)\backslash \V_L(N_i)$,
$$\V_L(I(\bar{u}))=\left\{\left(\frac{g_{i1}(\bar{u},\beta)}{g_i(\bar{u},\beta)},\ldots,\frac{g_{in}(\bar{u},\beta)}{g_i(\bar{u},T)}\right)\Big|\;\beta \in \V_L(\mathcal{X}_t(\bar{u},T))\right\}, $$
where $E_i, N_i \in k[U]$, $\mathcal{X}_t, g_{i},g_{i1},\ldots,g_{in} \in k(U)[T]$.
\end{theorem}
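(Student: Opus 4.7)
The plan is to reduce Theorem \ref{thm3.9} to the non-parametric Rouillier construction, applied uniformly over $k(U)[T]$, after partitioning $\V_L(E)\backslash\V_L(N)$ so that $\gcd(\mathcal{X}_t,\mathcal{X}_t')$ has a single explicit expression on each piece. The key ingredients are Lemma \ref{lem3.4} (which makes matrices, characteristic polynomials, and traces well-defined and specializable) and Lemma \ref{cor1} (which stratifies the parameter space by the degree of the relevant GCD via subresultants).

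First I would clear denominators so that $\mathcal{X}_t,\mathcal{X}_t'\in k[U][T]$; by Lemma \ref{lem3.4} and the hypothesis on leading coefficients of $G$, the $T$-degree of $\mathcal{X}_t$ is preserved under specialization at every $\bar u\in \V_L(E)\backslash\V_L(N)$. Applying Lemma \ref{cor1} then gives a finite partition $\V_L(E)\backslash\V_L(N)=\bigcup_{i=1}^{s}\V_L(E_i)\backslash\V_L(N_i)$ on which $\deg\gcd(\mathcal{X}_t(\bar u,T),\mathcal{X}_t'(\bar u,T))$ equals a constant $d_i$. On the $i$-th branch, $\operatorname{PSC}_{d_i}(\mathcal{X}_t,\mathcal{X}_t')$ is nonzero at every $\bar u$, so the subresultant $\operatorname{SubRes}_{d_i}(\mathcal{X}_t,\mathcal{X}_t')$, rescaled by $\operatorname{PSC}_{d_i}^{-1}\in k(U)$, provides a uniform GCD expression $h_i(U,T)\in k(U)[T]$ whose specialization at each $\bar u$ in the branch agrees (up to a nonzero scalar in $L$) with the true GCD. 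Equivalently, one can invoke the parametric GCD algorithm cited in Section 2.2 to obtain the same $h_i$.

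Next I would set $\overline{\mathcal{X}}_{t,i}:=\mathcal{X}_t/h_i\in k(U)[T]$; since $t$ separates $\V_L(I(\bar u))$ throughout the branch, $\deg_T\overline{\mathcal{X}}_{t,i}=k_0$ equals the (constant) number of points in $\V_L(I(\bar u))$. Write $\overline{\mathcal{X}}_{t,i}=\sum_{j=0}^{k_0}a_{ij}(U)T^{k_0-j}$ and, using the $k(U)$-valued multiplication matrices $M_{t^j}^I$ and $M_{x_k t^j}^I$ supplied by Lemma \ref{lem3.4}, form traces in $k(U)$ and substitute them into the parametric analogues of formulas (\ref{eq1}) and (\ref{eq2}) to define $g_i,g_{i1},\ldots,g_{in}\in k(U)[T]$. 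Any new denominators introduced in $k(U)$ during these steps (powers of $\operatorname{PSC}_{d_i}$ and of the leading coefficients of elements of $G$) are absorbed into $N_i$ by multiplying its members by those factors; this preserves $\V_L(E_i)\backslash\V_L(N_i)$ and guarantees that every specialization below is well-defined.

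Correctness at each $\bar u\in \V_L(E_i)\backslash\V_L(N_i)$ is then a formal check: Lemma \ref{lem3.4} says $M_t^I(\bar u)$ is the multiplication matrix of $t$ in $L[X]/I(\bar u)$ and that $\mathcal{X}_t(\bar u,T)$ is its characteristic polynomial; since $h_i(\bar u,T)$ is the actual GCD, $\overline{\mathcal{X}}_{t,i}(\bar u,T)$ is the Rouillier square-free factor $\overline{\mathcal{X}_t}$; and traces commute with specialization, so $g_i(\bar u,T)$ and $g_{ij}(\bar u,T)$ coincide with the outputs of the non-parametric Rouillier algorithm applied to $I(\bar u)$ with separating element $t$. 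Because $\mathcal{X}_t(\bar u,T)$ and $\overline{\mathcal{X}}_{t,i}(\bar u,T)$ share the same set of roots in $L$, the RUR formula in the theorem statement is exactly the one produced by Rouillier's theorem. The main obstacle I anticipate is the denominator bookkeeping: ensuring that every rational function in $k(U)$ arising from the division $\mathcal{X}_t/h_i$, from the trace computations, and from (\ref{eq1})--(\ref{eq2}) has a denominator nonvanishing on the branch, so that augmenting $N_i$ by these denominators yields a legitimate parametric branch; once this is verified the proof reduces to a branch-wise invocation of the scalar RUR.
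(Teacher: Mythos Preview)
Your proposal is correct and follows the same structure as the paper's proof: clear denominators, partition $\V_L(E)\backslash\V_L(N)$ so that $\gcd(\mathcal{X}_t,\mathcal{X}_t')$ has a uniform expression on each piece, divide in $k(U)[T]$ to obtain the square-free part $\overline{\mathcal{X}_t}$, and then apply the parametric trace formulas (\ref{eq1})--(\ref{eq2}) via Lemma~\ref{lem3.4}. The only cosmetic difference is that the paper obtains the partition and the $d_i$ directly from the parametric GCD algorithm of \cite{Kapur18}, whereas you go through Lemma~\ref{cor1} and take $h_i=\operatorname{PSC}_{d_i}^{-1}\operatorname{SubRes}_{d_i}(\mathcal{X}_t,\mathcal{X}_t')$; since you yourself note the two routes are interchangeable, the arguments coincide.
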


\begin{proof}
 Let $\mathcal{X}(U,T)\in k[U][T]$ be the product of $\mathcal{X}_t(U,T)$ and the least common multiple of all denominators of $\mathcal{X}_t(U,T)$. By the parametric GCD algorithm, we can obtain the parametric GCDs of $\mathcal{X}(U,T)$ and $\mathcal{X}^{\prime}(U,T)$ on $\V_L(E)\backslash \V_L(N)$ and assume that it is
	$$\left\{\left(E_{1}, N_{1}, d_{1}\right), \ldots,\left(E_{s}, N_{s}, d_{s}\right)\right\},$$
 where $\V_L(E) \backslash \V_L(N)=\cup_{i=1}^{s} \V_L(E_i)\backslash \V_L(N_i)$, $d_{1}, \ldots, d_{s}$ $\in k[U][T]$ and for every $i=1, \ldots, s$, $d_i(\bar{u},T)$ is a greatest common divisor of $\mathcal{X}(\bar{u},T)$ and $\mathcal{X}^{\prime}(\bar{u},T)$ for any $\bar{u} \in \V_L(E_i)\backslash \V_L(N_i)$. Moreover, the leading coefficient of $d_i$ w.r.t. $T$ is not zero under $\bar{u} \in \V_L(E_i)\backslash \V_L(N_i)$. Therefore, we can divide $d_i$ by the leading coefficient of $d_i$ and make $d_i \in k(U)[T]$ monic. For $\bar{u} \in \V_L(E_i) \backslash \V_L(N_i)$, we have $$\gcd(\mathcal{X}_t(\bar{u},T),\mathcal{X}_t^{\prime}(\bar{u},T))=\gcd(\mathcal{X}(\bar{u},T),\mathcal{X}^{\prime}(\bar{u},T))=d_i(\bar{u},T).$$
 For each $i=1,\ldots,s$, using divide algorithm in $k(U)[T]$, we get
 $$\mathcal{X}_t(U,T)=q_i(U,T)d_i(U,T)+r_i(U,T),\;\deg_T(r_i)<\deg_T(d_i),$$
 Since $d_i(\bar{u},T)$ divides $\mathcal{X}_t(\bar{u},T)$, for all $\bar{u} \in \V_L(E_i) \backslash \V_L(N_i)$, $r_i(U,T)$ $=0$. Therefore,
 $$\frac{\mathcal{X}_t(\bar{u},T)}{\gcd \left(\mathcal{X}_{t}^{\prime}(\bar{u},T), \mathcal{X}_{t}(\bar{u},T)\right)}  =q_i(\bar{u},T).$$
 Let $\overline{\mathcal{X}_t}(U,T)=q_i(U,T)=\sum_{j=0}^{d}a_jT^{d-j} \in k(U)[T]$. Compute
 \begin{align}
 g_i(U,T)=&\sum_{i=0}^{d-1}\sum_{j=0}^{d-i-1}\text{Tr}(M_{t^i}^I)a_jT^{d-i-j-1}, \label{eq3}\\
 g_{ij}(U,T)=&\sum_{i=0}^{d-1}\sum_{j=0}^{d-i-1}\text{Tr}(M_{x_kt^i}^I)a_jT^{d-i-j-1},j=1,\ldots,n. \label{eq4}
 \end{align}
 Then
 $$\{\mathcal{X}_t(\bar{u},T),g_i(\bar{u},T),g_{i1}(\bar{u},T),\ldots,g_{in}(\bar{u},T)\}$$
 is an RUR of $I(\bar{u})$ for $\bar{u} \in \V_L(E_i) \backslash \V_L(N_i)$.
\end{proof}

 We are now ready to give the first algorithm for computing the rational univariate representations with parameters of a ideal in $k[U,X]$, where we deliberately avoid tricks and optimizations, such as the selection of separating elements, reducing $\mathcal{X}$ or $g_i$ by $E_i$.
\begin{center}
\textbf{Algorithm 1 for computing RURs with parameters}
\end{center}

{\bf Input:} $I=\langle f_1,\ldots,f_l\rangle \subset k[U,X]$.

\begin{enumerate}
\item Compute a comprehensive \gr system of $I$ with $\prec_{X,U}$. Choose the zero-dimensional branches by Theorem \ref{thm3.2} and denote them by $\mathcal{G}=\{(E_1,N_1,G_1),\ldots,(E_s,N_s,G_{s})\}$.
\item $B_2:=\emptyset$. For each branch $(E_i,N_i,G_i) \in \mathcal{G}$ do:
\begin{enumerate}
	\item Compute $Q_1^I=(p_{ij}/q_{ij}) \in k(U)^{r \times r}$ defined in Lemma \ref{lem3.4}. Let $Q=D \cdot Q_1^I$, where $D=\text{lcm}(\{q_{ij}:i,j=1,\ldots,r\})$.
	\item Assume that $Q=(f_{ij})_{1 \leq i,j \leq r}$. Construct polynomial $f_i=\sum_{j=1}^rf_{ij}z_j$. Compute the minimal Comprehensive \gr system $\mathcal{G}=\{(E_{i1},N_{i1},G_{i1}),\ldots,(E_{is_i},N_{is_i},G_{is_i})\}$ of $\{f_1,\ldots,f_r\}$. Set
	$A_{i}:=\{(E_{i1},N_{i1},k_{i1}),$ $ \ldots$, $(E_{is_i},N_{is_i},k_{is_i})\}$, where $k_{ij}$ is equal to number of element in $G_{ij}$, $j=1,\ldots,s_i$.
	\item $B_2:=B_2\cup A_i$.
\end{enumerate}
\item $B_3:=\emptyset$. For each branch $(E_{ij},N_{ij},k_{ij}) \in B_2$ do:
\begin{enumerate}
	\item If $k_{ij}=1$ then $A_{ij}:=\{(E_{ij},N_{ij},1)\}$ and goto (3.d); else $A_{ij}:=\emptyset$, $S:=\{x_{1}+i x_{2}+\cdots+i^{n-1} x_{n}, i=1, \ldots, n k_{ij}(k_{ij}-1) / 2\}$.
	\item Randomly choose $t\in S$. Compute $M_t^I \in k(U)^{r\times r}$, $\mathcal{X}_t \in k(U)[T]$. Let $\mathcal{X}=D' \cdot \mathcal{X}_t$. Compute $\text{PSC}_d(\mathcal{X},\mathcal{X}')$, where $D'$ is the least common multiple of all denominators of $\mathcal{X}_t$ and $d=\deg_T(\mathcal{X})-k_{ij}$.
	\item If $\V_L(E_{ij})\backslash \V_L(N_{ij} \times \text{PSC}_d(\mathcal{X},\mathcal{X}'))=\emptyset$, then $S:=S\backslash \{t\}$ and goto (3.b); else, $A_{ij}:=A_{ij} \cup \{(E_{ij},N_{ij} \times \text{PSC}_d(\mathcal{X},\mathcal{X}'),t)\}$. Update $E_{ij}:=E_{ij} \cup \{\text{PSC}_d(\mathcal{X},\mathcal{X}')\}$ and $S:=S\backslash \{t\}$. If $\V_L(E_{ij})\backslash \V_L(N_{ij}) \neq \emptyset$, then goto (3.b).
	\item $B_3:=B_3 \cup A_{ij}$.
\end{enumerate}
\item $B_4:=\emptyset$. For each branch $(E_{ijk},N_{ijk},t_{ijk})\in B_3$ do:
\begin{enumerate}
	\item  Compute parametric GCDs of $\mathcal{X}$ and $\mathcal{X}'$, denoted by
          $$\{(E_{ijkl},N_{ijkl},d_{ijkl}):l=1,\ldots,v\},$$
          where $\mathcal{X}_t \in k(U)[T]$ and $\mathcal{X} \in k[U][T]$ are the polynomials computed in (3.b) corresponding to $t_{ijk}$.
	\item For each $(E_{ijkl},N_{ijkl},d_{ijkl})$ do:

  $A_{ijkl}:=\emptyset$. Compute $\overline{\mathcal{X}_{t}}$. By Formula (3), (4) compute $g_{ijkl}(U,T)$, $g_{ijkl1}(U,T)$, \ldots, $g_{ijkln}(U,T)$. $A_{ijkl}:=A_{ijkl}\cup \{(E_{ijkl},N_{ijkl},\mathcal{X}_t,g_{ijkl},g_{ijkl1},\ldots,g_{ijkln})\}$
   \item $B_4:=B_4\cup A_{ijkl}$.
\end{enumerate}
\end{enumerate}

{\bf Output:} a finite set $B_4$ of the rational univariate representations with parameters, renumbering it by
 $$\left\{(E_1,N_1,\mathcal{X}_1,g_1,g_{11},\ldots,g_{1n}),\ldots,(E_s,N_s,\mathcal{X}_s,g_s,g_{s1},\ldots,g_{sn})\right\}.$$

 For each $i \in \{1,\ldots,s\}$, it satisfies that :
 for all $\bar{u} \in \V_L(E_i)\backslash \V_L(N_i)$,  $$\V_L(I(\bar{u}))=\left\{\left(\frac{g_{i1}(\bar{u},\beta)}{g_i(\bar{u},\beta)},\ldots,\frac{g_{in}(\bar{u},\beta)}{g_i(\bar{u},\beta)}\right)\Big|\;\beta \in \V_L(\mathcal{X}_i(\bar{u},T))\right\},$$
 where $I(\bar{u})=\langle f_1(\bar{u},X),\ldots,f_l(\bar{u},X) \rangle \subset k[X]$, $\mathcal{X}_i(\bar{u},T) \in k[T].$

 \begin{theorem}
 The above algorithm works correctly and terminates in a finite number of steps.
 \end{theorem}

 \begin{proof}
 The correctness and the termination of the algorithm directly follows from Theorems \ref{thm3.2}, \ref{thm22}, \ref{thm44} and \ref{thm3.9}.
 \end{proof}

\subsection{Algorithm 2}

 Based on three steps of the second idea, we can design another algorithm different from Algorithm 1. Since the first two steps are same, we will only elaborate on Step 3$^\spadesuit$.

 Assume that for all $\bar{u} \in \V_L(E) \backslash \V_L(N)$, the number of points in $\V_L(I(\bar{u}))$ is the same, denoted by $k_0$.
 Choose randomly a $t \in \{x_{1}+i x_{2}+\cdots+i^{n-1} x_{n}, i=1,\ldots, n k_0(k_0-1) / 2\}$ and compute the characteristic polynomial $\mathcal{X}_t(U,T)\in k(U)[T]$ as defined in Lemma \ref{lem3.4}. Let $\mathcal{X}(U,T)\in k[U][T]$ be the product of $\mathcal{X}_t(U,T)$ and the least common multiple of all denominators of $\mathcal{X}_t(U,T)$. By the parametric GCD algorithm, we can obtain the parametric GCDs of $\mathcal{X}(U,T)$ and $\mathcal{X}^{\prime}(U,T)$ on $\V_L(E)\backslash \V_L(N)$ and assume that it is
	$$\left\{\left(E_{1}, N_{1}, d_{1}\right), \ldots,\left(E_{r}, N_{r}, d_{r}\right)\right\},$$
 where $\V_L(E) \backslash \V_L(N)=\cup_{i=1}^{r} \V_L(E_i)\backslash \V_L(N_i)$, $d_{1}, \ldots, d_{r}$ $\in k[U][T]$ and for every $i=1, \ldots, r$, $d_i(\bar{u},T)$ is a greatest common divisor of $\mathcal{X}(\bar{u},T)$ and $\mathcal{X}^{\prime}(\bar{u},T)$ for any $\bar{u} \in \V_L(E_i)\backslash \V_L(N_i)$. Moreover, the leading coefficient of $d_i$ w.r.t. $T$ is not zero under $\bar{u} \in \V_L(E_i)\backslash \V_L(N_i)$.

 As stated in the proof of Theorem \ref{thm33}, for $\bar{u} \in \V_L(E) \backslash \V_L(N)$, $t$ is a separating element of $I(\bar{u})$ if and only
 $$\deg(\gcd \left(\mathcal{X}^{\prime}(\bar{u},T), \mathcal{X}(\bar{u},T)\right))=\deg_T (\mathcal{X}(U,T)) -k_0.$$
 Since $$\deg(\gcd \left(\mathcal{X}^{\prime}(\bar{u},T), \mathcal{X}(\bar{u},T)\right))\geq \deg_T (\mathcal{X}(U,T)) -k_0,$$
 combining with Remark 16, it implies that there exists a unique $i\in \{1,\ldots,r\}$ such that $t$ is a separating element of $I(\bar{u})$ for $\bar{u} \in \V_L(E_i)\backslash \V_L(N_i)$.

 Then we obtain the following theorem.

 \begin{theorem}\label{thm55}
 Let $I=\langle f_1(U,X),\ldots,f_l(U,X)\rangle \subset k[U,X]$, $E,N \subset k[U]$, $G \subset k[U,X]$. Assume that for all $\bar{u} \in \V_L(E)\backslash \V_L(N)$,
 \begin{enumerate}
	\item $I(\bar{u})$ is zero-dimensional and $G(\bar{u})$ is a \gr basis of $I(\bar{u})$;
	\item for each $g\in G$, $c_1(\bar{u})\neq 0$, where $c_1$ is the leading coefficient of $g$ w.r.t. $X$;
	\item the number of points in $\V_L(I(\bar{u}))$ is equal to $k_0$.
 \end{enumerate}
 Then there is a finite set
 \[\{(E_1,N_1,t_1,d_1),\ldots,(E_s,N_s,t_s,d_s)\},~E_i,N_i \subset k[U],\;t_i \in k[X],d_i \in k[X][T],\]
 such that $\V_L(E) \backslash \V_L(N)=\cup_{i=1}^s\V_L(E_i) \backslash \V_L(N_i)$ and for each $i \in \{1,\ldots,s\}$,  $t_i$ is a separating element of $I(\bar{u})$ and $d_i(\bar{u},T)$ is a greatest common divisor of $\mathcal{X}(\bar{u},T)$ and $\mathcal{X}^{\prime}(\bar{u},T)$ for any $\bar{u} \in \V_L(E_i)\backslash \V_L(N_i)$.
 \end{theorem}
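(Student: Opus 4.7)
The plan is to follow the procedure sketched immediately before the statement, building the decomposition incrementally using the parametric GCD algorithm as the principal tool rather than invoking Theorem~\ref{thm44} separately. The key observation is that on every branch returned by the parametric GCD algorithm the degree of $d_j$ in $T$ is constant, since by definition of a parametric GCD the leading coefficient of $d_j$ with respect to $T$ cannot vanish on its branch. Consequently a single degree comparison determines whether the candidate $t$ is a separating element throughout that entire branch.

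First I would dispose of the trivial case $k_0 = 1$ by setting $t_1 = 1$, $E_1 = E$, $N_1 = N$, and taking $d_1$ to be the corresponding GCD (a power of $T-1$); then $t_1$ separates $\V_L(I(\bar{u}))$ for every admissible $\bar{u}$. For $k_0 \geq 2$ I would pick a candidate $t$ from the finite set
\[
S = \{x_1 + i x_2 + \cdots + i^{n-1} x_n : i = 1, \ldots, n k_0(k_0-1)/2\},
\]
compute the characteristic polynomial $\mathcal{X}_t(U,T) \in k(U)[T]$ via Lemma~\ref{lem3.4}, clear denominators to obtain $\mathcal{X}(U,T) \in k[U][T]$, and then run the parametric GCD algorithm of \cite{Kapur18} on the pair $(\mathcal{X}, \mathcal{X}')$ over $\V_L(E)\setminus \V_L(N)$. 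This yields a finite decomposition $\{(E_j, N_j, d_j)\}_{j=1}^{r}$ of the parameter region.

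Following the argument already used inside the proof of Theorem~\ref{thm33}, $t$ separates $I(\bar{u})$ on $(E_j, N_j)$ precisely when $\deg_T d_j = \deg_T \mathcal{X} - k_0$, and the reverse inequality $\deg_T d_j \geq \deg_T \mathcal{X} - k_0$ is automatic because $\mathcal{X}(\bar{u},T)$ has at most $k_0$ distinct roots. Every branch satisfying this degree equality is appended to the output as $(E_j, N_j, t, d_j)$; on each remaining branch the procedure is invoked recursively with a fresh element of $S$. Termination follows the pattern of Theorem~\ref{thm44}: whenever a separating sub-branch is peeled off, the defining ideal of the complementary region is strictly enlarged by adjoining the witness polynomials that force the degree strict inequality, yielding a strictly ascending chain of ideals in $k[U]$ that must stabilize.

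The main obstacle I anticipate is verifying that the recursion does not stall, i.e.\ that at each stage some $t \in S$ actually carves off a nonempty separating sub-branch so that the ascending-chain argument produces progress. This reduces to Rouillier's cardinality estimate for $S$: for any individual $\bar{u}$ in the remaining parameter region, at least one element of $S$ separates $I(\bar{u})$, and for that choice of $t$ the degree equality $\deg_T d_j = \deg_T \mathcal{X} - k_0$ must hold on the branch of the parametric GCD containing $\bar{u}$. Hence iterating through $S$ cannot exhaust itself with a nonempty remainder, and the entire construction terminates after finitely many steps, delivering the required finite set $\{(E_i, N_i, t_i, d_i)\}$.
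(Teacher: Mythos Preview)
Your proposal is correct and follows essentially the same approach as the paper: decompose via the parametric GCD of $\mathcal{X}$ and $\mathcal{X}'$, accept the branches on which $\deg_T d_j = \deg_T \mathcal{X} - k_0$, recurse on the rest, and terminate by an ascending chain of ideals in $k[U]$. Two minor differences are worth noting: the paper identifies the ``witness polynomial'' explicitly as $\operatorname{PSC}_{\deg_T\mathcal{X}-k_0}(\mathcal{X},\mathcal{X}')$ (via Theorem~\ref{thm33}) to drive the strict ideal inclusion, whereas you leave this implicit; conversely, you are more careful than the paper about why iterating through $S$ cannot stall on a nonempty remainder, which the paper glosses over.
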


 \begin{proof}
 We choose $t\in S=\{x_{1}+i x_{2}+\cdots+i^{n-1} x_{n}, i=1, \ldots, n k_0(k_0-1) / 2\}$. As mentioned above, we can compute the parametric GCDs of $\mathcal{X}(U,T)$ and $\mathcal{X}^{\prime}(U,T)$ on $\V_L(E)\backslash \V_L(N)$ and assume that it is
	$$\left\{\left(\tilde{E}_{1}, \tilde{N}_{1}, \tilde{d}_{1}\right), \ldots,\left(\tilde{E}_{r}, \tilde{N}_{r}, \tilde{d}_{r}\right)\right\}.$$
 If $\deg(\tilde{d}_i)> \deg_T (\mathcal{X}(U,T)) -k_0$  for every $i\in \{1,\ldots,r\}$, pick another $t$ from $S$; otherwise, there exists a unique $i\in \{1,\ldots,r\}$ such that $\deg(\tilde{d}_i)= \deg_T (\mathcal{X}(U,T)) -k_0$, then $t$ is a separating element of $I(\bar{u})$ for $\bar{u} \in \V_L(\tilde{E}_i)\backslash \V_L(\tilde{N}_i)$.  Without loss of generality, we assume $i=1$, let $E_1=\tilde{E}_1, N_1=\tilde{N}_1, t_1=t, d_1=\tilde{d}_1$. If $r=1$, it is done. Otherwise, it implies from Theorem \ref{thm33} that $\operatorname{PSC}_{\deg_T (\mathcal{X}(U,T)) -k_0}(\mathcal{X},\mathcal{X}^\prime)\in \sqrt{\langle \tilde{E}_{i}\rangle}$ for $i=2, \ldots, r$. Since $\V_L(E_1) \backslash \V_L(N_1)\neq\emptyset$, $\operatorname{PSC}_{\deg_T (\mathcal{X}(U,T)) -k_0}(\mathcal{X},\mathcal{X}^\prime)\notin \sqrt{\langle E\rangle}$. Thus $\sqrt{\langle E \rangle} \subsetneq \sqrt{\langle \tilde{E}_{i}\rangle}$ for $i=2, \ldots, r$. Now, for the remaining $r-1$ branches, choosing $t_2\in S$ and computing the parametric GCDs of $\mathcal{X}(U,T)$ and $\mathcal{X}^{\prime}(U,T)$ on these branches, we can obtain $(E_2,N_2,t_2,d_2)$. Go on like this, we get strictly ascending ideal chains:
 $$E \subsetneq \tilde{E}_{i} \subsetneq \tilde{E}_{ij} \subsetneq \tilde{E}_{ijk} \subsetneq \cdots,$$
 where $i,j,k\neq1$. According to the ascending chains condition, it must stop in finite steps. Thus, we obtain a finite set
\[\{(E_1,N_1,t_1,d_1),\ldots,(E_s,N_s,t_s,d_s)\},~E_i,N_i \subset k[U],\;t_i \in k[X],d_i \in k[X][T],\]
 such that $\V_L(E) \backslash \V_L(N)=\cup_{i=1}^s\V_L(E_i) \backslash \V_L(N_i)$ and for each $i \in \{1,\ldots,s\}$,  $t_i$ is a separating element of $I(\bar{u})$ and $d_i(\bar{u},T)$ is a greatest common divisor of $\mathcal{X}(\bar{u},T)$ and $\mathcal{X}^{\prime}(\bar{u},T)$ for any $\bar{u} \in \V_L(E_i)\backslash \V_L(N_i)$.
\end{proof}

 According to Theorem \ref{thm3.9}, we further compute $g(U,T)$, $g_1(U,T)$, $\ldots$, $g_n(U,T)$ and obtain an RUR for each branch.

 Now we design another algorithm for computing the rational univariate representations with parameters of a ideal in $k[U,X]$.
\begin{center}
\textbf{Algorithm 2 for computing RURs with parameters}
\end{center}

{\bf Input:} $I=\langle f_1,\ldots,f_l\rangle \subset k[U,X]$.

\begin{enumerate}
\item Compute a comprehensive \gr system of $I$ with $\prec_{X,U}$. Choose the zero-dimensional branches by Theorem \ref{thm3.2} and denote them by $\mathcal{G}=\{(E_1,N_1,G_1),\ldots,(E_s,N_s,G_{s})\}$.
\item $B_2:=\emptyset$. For each branch $(E_i,N_i,G_i)\in\mathcal{G}$ do:
\begin{enumerate}
	\item Compute $Q_1^I=(p_{ij}/q_{ij}) \in k(U)^{r \times r}$ defined in Lemma \ref{lem3.4}.
          Let $Q=D \cdot Q_1^I$, where $D=\text{lcm}(\{q_{ij}:i,j=1,\ldots,r\})$.
	\item Assume that $Q=(f_{ij})_{1 \leq i,j \leq r}$. Construct polynomial $f_i=\sum_{j=1}^rf_{ij}z_j$.
          Compute the minimal Comprehensive \gr system $\mathcal{G}'=\{(E_{i1},N_{i1},G_{i1}),\ldots,(E_{is_i},N_{is_i},G_{is_i})\}$ of $\{f_1,\ldots,f_r\}$.
          Set $A_{i}:=\{(E_{i1},N_{i1},k_{i1}),$ $ \ldots$, $(E_{is_i},N_{is_i},k_{is_i})\}$, where $k_{ij}$ is equal to number of element in $G_{ij}$, $j=1,\ldots,s_i$.
	\item $B_2:=B_2\cup A_i$.
\end{enumerate}
\item $B_3:=\emptyset$. For each branch $(E_{ij},N_{ij},k_{ij}) \in B_2$ do:
\begin{enumerate}
	\item Randomly choose $t\in S$. Compute $M_t^I \in k(U)^{r\times r}$, $\mathcal{X}_t \in k(U)[T]$.
          Let $\mathcal{X}=D' \cdot \mathcal{X}_t$. Compute parametric GCDs of $\mathcal{X}$ and $\mathcal{X}'$, denoted by
          $$\{(E_{ijk},N_{ijk},d_{ijk}):k=1,\ldots,v\},$$
          where $D'$ is the least common multiple of all denominators of $\mathcal{X}_t$.
	\item For each $(E_{ijk},N_{ijk},d_{ijk})$ do:

          $A_{ijk}:=\emptyset$. If $\deg(d_{ijk})> \deg_T (\mathcal{X}(U,T)) -k_{ij}$ , then $S:=S\backslash \{t\}$ and goto (3.a); else, compute $\overline{\mathcal{X}_{t}}$. By Formula (3), (4) compute $g_{ijk}(U,T)$, $g_{ijk1}(U,T)$, \ldots, $g_{ijkn}(U,T)$. $A_{ijk}:=A_{ijk}\cup \{(E_{ijk},N_{ijk},\mathcal{X}_t,g_{ijk},g_{ijk1},\ldots,g_{ijkn})\}$
	\item $B_3:=B_3 \cup A_{ijk}$.
\end{enumerate}
\end{enumerate}

{\bf Output:} a finite set $B_3$ of the rational univariate representations with parameters, renumbering it by
 $$\left\{(E_1,N_1,\mathcal{X}_1,g_1,g_{11},\ldots,g_{1n}),\ldots,(E_s,N_s,\mathcal{X}_s,g_s,g_{s1},\ldots,g_{sn})\right\}.$$

 For each $i \in \{1,\ldots,s\}$, it satisfies that :
 for all $\bar{u} \in \V_L(E_i)\backslash \V_L(N_i)$,  $$\V_L(I(\bar{u}))=\left\{\left(\frac{g_{i1}(\bar{u},\beta)}{g_i(\bar{u},\beta)},\ldots,\frac{g_{in}(\bar{u},\beta)}{g_i(\bar{u},\beta)}\right)\Big|\;\beta \in \V_L(\mathcal{X}_i(\bar{u},T))\right\},$$
 where $I(\bar{u})=\langle f_1(\bar{u},X),\ldots,f_l(\bar{u},X) \rangle \subset k[X]$, $\mathcal{X}_i(\bar{u},T) \in k[T].$

 \begin{theorem}
 The above algorithm works correctly and terminates in a finite number of steps.
 \end{theorem}

 \begin{proof}
 The correctness and the termination of the algorithm directly follows from Theorems \ref{thm3.2}, \ref{thm22}, \ref{thm55}.
 \end{proof}

\section{Example}\label{sec4}

 We use the following simple example to illustrate the steps in the above proposed algorithms.

 \begin{example}\label{ex}
	Let $I=\langle u_1x_1^2+u_2x_2+u_2,u_2x_2^2+u_1x_2+u_1 \rangle \subset \mathbb{C}[u_1,u_2,x_1,x_2]$ with $\prec_U$ and $\prec_X$ being graded lexicographic order. We see the process of Algorithm 1:

    (1) Compute the minimal comprehensive \gr system for $I$:
	$$\begin{aligned}
		\mathcal{G}=&\{(\{0\},\{u_1u_2\},\{u_2x_2^2+u_1x_2+u_1, u_1x_1^2+u_2x_2+u_2\}),\\
		&(\{u_1\},\{u_2, u_2u_1\},\{1\}),(\{u_2\},\{u_1\},\{u_1x_1^2, u_1x_2+u_1\}),\\
		&(\{u_1,u_2\},\{1\},\{0\})\}.
	\end{aligned}$$
    By Finiteness Theorem, $(\{0\},\{u_1u_2\},\{u_2x_2^2+u_1x_2+u_1, u_1x_1^2+u_2x_2+u_2\})$ and $(\{u_2\},\{u_1\},\{u_1x_1^2, u_1x_2+u_1\})$ are two zero-dimensional branches. For briefness, we only consider the first branch.

     (2) For $(\{0\},\{u_1u_2\},\{u_2x_2^2+u_1x_2+u_1, u_1x_1^2+u_2x_2+u_2\})$, we compute $Q^I_1(U)$ as defined in Theorem \ref{thm22}:
		{\scriptsize
        $$Q^I_1(U)=\left[\begin{matrix}
		4 & -\frac{2 u_{1}}{u_{2}} & 0 & 0 \\
		-\frac{2 u_{1}}{u_{2}} & \frac{2 u_{1}\left(u_{1}-2 u_{2}\right)}{u_{2}^{2}} & 0 & 0 \\
		0 & 0 & \frac{2\left(u_{1}-2 u_{2}\right)}{u_{1}} & -\frac{2\left(-3 u_{2}+u_{1}\right)}{u_{2}} \\
		0 & 0 & \frac{2\left(3 u_{2}-u_{1}\right)}{u_{2}} & \frac{2\left(u_{1}^{2}-4 u_{1} u_{2}+2 u_{2}^{2}\right)}{u_{2}^{2}}
	\end{matrix}\right].$$
}Then $D=u_1u_2^2$ be the least common multiple of all denominators of $Q_1^I$. Let $Q=D\cdot Q^I_1$. Denote the entries of $Q$ by $f_{ij}$ and set $f_i=\sum_{j=1}^{4}f_{ij}z_j$, $i=1,\ldots,4$. We compute the minimal comprehensive \gr system of $\{f_1,f_2,f_3,f_4\}$ on $\mathbb{C}^2\setminus\V_{\mathbb{C}}(u_1u_2)$: $\{\big(\{0\}, \{(u_1-4u_2)u_2u_1\}, \{(u_1^3u_2-4u_1^2u_2^2)z_2, (u_1^2u_2^3-4u_1u_2^4)z_4, 2u_1u_2^2z_1-u_1^2u_2z_2,$ $2u_2^4z_3-u_1u_2^3z_4\}\big), \big(\{u_1-4u_2\}, \{u_1u_2\}, $ $\{u_2^3z_1-2u_2^3z_2, u_2^3z_3-2u_2^3z_4\}\big)\}.$
    %and get two branches:
    %\begin{align*}
%    	\{&\big(\emptyset, \{(u_1-4u_2)u_2u_1\}, \{(u_1^3u_2-4u_1^2u_2^2)z_2, (u_1^2u_2^3-4u_1u_2^4)z_4,\\
%          &2u_1u_2^2z_1-u_1^2u_2z_2, 2u_2^4e_3-u_1u_2^3z_4\}\big),\\
%    	  &\big(\{u_1-4u_2\}, \{u_1u_2\}, \{u_2^3z_1-2u_2^3z_2, u_2^3z_3-2u_2^3z_4\}\big)\}.
%    \end{align*}
	Therefore,
	\begin{align*}
	B_2=&\{(E_{11},N_{11},k_{11}),(E_{12},N_{12},k_{12})\}\\
       =&\{(\{0\},\{u_1u_2(u_1-4u_2)\},4),(\{u_1-4u_2\},\{u_1u_2\},2)\}.
	\end{align*}

    (3) Let $B_3=\emptyset$. For the branch $(E_{11},N_{11},k_{11})=(\{0\},\{u_1u_2(u_1-4u_2)\},4)$, let $A_{11}=\emptyset$. Randomly choose $t=x_1\in S$. By computation,
    $$\begin{aligned}
    M_t^I&=\left[\begin{smallmatrix}
    	0 & 0 & 1 & 0
    	\\
    	0 & 0 & 0 & 1
    	\\
    	-\frac{u_{2}}{u_{1}} & -\frac{u_{2}}{u_{1}} & 0 & 0
    	\\
    	1 & \frac{u_{1}-u_{2}}{u_{1}} & 0 & 0
    \end{smallmatrix}\right],\quad \mathcal{X}_t=T^{4}-\frac{(u_{1}-2 u_{2}) T^{2}}{u_{1}}+\frac{u_{2}^{2}}{u_{1}^{2}}.
    \end{aligned}.$$
     Then $\mathcal{X}=u_1^2T^{4}-u_1(u_{1}-2 u_{2}) T^{2}+u_{2}^{2}$, $\mathcal{X}'=4u_1^2T^{3}-2u_1(u_{1}-2 u_{2}) T$, $\text{PSC}_0(\mathcal{X},\mathcal{X}')=16 u_{1}^{10} u_{2}^{2} (u_{1}-4 u_{2})^{2}$.
    By check, $\mathbb{C}^2\backslash \V_{\mathbb{C}}(u_1u_2(u_1-4u_2)\cdot \text{PSC}_0(\mathcal{X},\mathcal{X}'))\neq \emptyset$, $x_1$ is a separating element of $I(\bar{u})$ under $\bar{u}\in\mathbb{C}^2\backslash \V_{\mathbb{C}}(16u_1^{11}u_2^3(u_1-4u_2)^3)$. Now
    $$A_{11}:=A_{11}\cup \{(\{0\},\{16u_1^{11}u_2^3(u_1-4u_2)^3\},x_1)\}.$$
    Update $E_{11}=E_{11}\cup \{\text{PSC}_0(\mathcal{X},\mathcal{X}')\}$. We can check $\V_{\mathbb{C}}(E_{11})\backslash \V_{\mathbb{C}}(N_{11})$ $=\emptyset$. Then this branch has been done and
    $$B_3:=B_3 \cup A_{11}=\{(\{0\},\{16u_1^{11}u_2^3(u_1-4u_2)^3\},x_1)\}.$$
	
	For the branch $(E_{12},N_{12},k_{12})=(\{u_1-4u_2\},\{u_1u_2\},2)$, let $A_{12}=\emptyset$. We also choose $t=x_1$. Then $M_t^I$, $\mathcal{X}_t$, $\mathcal{X}$, $\mathcal{X}'$ are the same as above. By computation,
	$$\text{PSC}_2(\mathcal{X},\mathcal{X}')=-8 u_{1}^{5} (u_{1}-2 u_{2}).$$
	We can check that $\V_{\mathbb{C}}(u_1-4u_2)\backslash \V_{\mathbb{C}}(u_1u_2\cdot \text{PSC}_2(\mathcal{X},\mathcal{X}'))\neq \emptyset$, then
    $$A_{12}:=A_{12}\cup \{(\{u_1-4u_2\},\{-8u_1^6u_2(u_1-2u_2)\},x_1)\}.$$
	Update $E_{12}=E_{12}\cup \{\text{PSC}_2(\mathcal{X},\mathcal{X}')\}=\{u_1-4u_2,-8u_1^5(u_1-2u_2)\}$. By check, $\V_{\mathbb{C}}(E_{12})\backslash \V_{\mathbb{C}}(N_{12})=\emptyset$. This branch has been done and
	\begin{align*}
		B_3:=&B_3 \cup A_{12}\\
            =&\{(\{0\},\{u_1^{11}u_2^3(u_1-4u_2)^3\},x_1),(\{u_1-4u_2\},\{-8u_1^{6}u_2(u_1-2u_2)\},x_1)\}.
	\end{align*}

	(4) For the branch $(\{0\},\{u_1^{11}u_2^3(u_1-4u_2)^3\},x_1)$, we compute the parametric GCDs of $\mathcal{X}$ and $\mathcal{X}'$ corresponding to $t=x_1$.
%on $\mathbb{C}^2\backslash \V_{\mathbb{C}}(u_1^7u_2^3(u_1-4u_2)^3)$.
 That is $\{(\{0\},{\{u_{2} u_{1} (u_{1}-4 u_{2})\}},1)\}$.
 Then $\overline{\mathcal{X}_t}=\mathcal{X}_t$. By Formula (3),(4),  $g_1=4 T^{3}-\frac{2(u_{1}-2 u_{2}) T}{u_{1}},$
	$g_{11}=\frac{2\left(u_{1}-2 u_{2}\right)}{u_{1}}T^{2}-\frac{4 u_{2}^{2}}{u_{1}^{2}},$
	$g_{12}=-\frac{2 u_{1} }{u_{2}}T^{3}+2 T.$
	Let $\mathcal{X}_1=\mathcal{X}_t=T^{4}-\frac{u_{1}-2 u_{2} }{u_{1}}T^{2}+\frac{u_{2}^{2}}{u_{1}^{2}}$, we obtain the RUR with parameters of this branch is $(\{0\},\{u_1u_2(u_1-4u_2)\},\mathcal{X}_1,g_1,g_{11},g_{12}).$
	
	For the branch $(\{u_1-4u_2\},\{-8u_1^6u_2(u_1-2u_2)\},x_1)$, we compute the parametric GCDs of $\mathcal{X}$, $\mathcal{X}'$ corresponding to $t=x_1$.
% on $\V_{\mathbb{C}}(u_1-4u_2)\backslash \V_{\mathbb{C}}(-8u_1^4u_2(u_1-2u_2))$.
 That is
		$$({\{u_{1}-4 u_{2}\}},{\{u_1u_2\}},4 T^{2} u_{2}^{2}-u_{2}^{2})$$
	By computation, $\overline{\mathcal{X}_{t}}=T^{2}-\frac{3 u_{1}-8 u_{2}}{4 u_{1}}$, $g_2=4T$, $g_{21}=\frac{2 (u_{1}-2 u_{2})}{u_{1}}$, $g_{22}=-\frac{2 u_{1} T}{u_{2}}$. Let $\mathcal{X}_2=\mathcal{X}_t$. In this case, we can reduce $\mathcal{X}_2$, $g_2$, $g_{21}$, $g_{22}$ by the relation $u_1-4u_2=0$. Then we have
	$\mathcal{X}_2=T^4-\frac{1}{2}T^2+\frac{1}{16}$, $g_2=4T$, $g_{21}=1,g_{22}=-8T$.
	Thus, the RUR with parameters of this branch is $({\{u_{1}-4 u_{2}\}},{\{(u_{1}-2 u_{2}) u_{1} u_{2}\}},\mathcal{X}_2,g_2,g_{21},g_{22}).$
	
%	(2) Consider $(\{u_2\},\{u_1\},\{u_1x_1^2, u_1x_2+u_1\})$, we can compute the RUR of this branch is
%	$(\{u_2\}, \{u_1\}, T^2, 2, 0, -2)$.
	
	In conclusion, we obtain two RURs of $I$ under the parameter branch $(\{0\},\{u_1u_2\})$. That is $(\{0\},\{u_1u_2(u_1-4u_2)\},\mathcal{X}_1,g_1,g_{11},g_{12})$ and $({\{u_{1}-4 u_{2}\}},{\{(u_{1}-2 u_{2}) u_{1} u_{2}\}},\mathcal{X}_2,g_2,g_{21},g_{22})$. Therefore,
	
	(i) For all $\bar{u}=(\bar{u}_1,\bar{u}_2) \in \mathbb{C}^2 \backslash \V_{\mathbb{C}}(u_1u_2(u_1-4u_2))$,
	\begin{align*}
		\V_{\mathbb{C}}(I(\bar{u}))
		=&\left\{\left(\frac{g_{11}(\bar{u},\beta)}{g_1(\bar{u},\beta)},\frac{g_{12}(\bar{u},\beta)}{g_{1}(\bar{u},\beta)}\right)\Big|\;\beta \in \V_{\mathbb{C}}\left(T^{4}-\frac{\bar{u}_{1}-2 \bar{u}_{2} }{\bar{u}_{1}}T^{2}+\frac{\bar{u}_{2}^{2}}{\bar{u}_{1}^{2}}\right)\right\}\\
		=&\left\{\left(\frac{(\bar{u}_{1}^{2}-2 \bar{u}_{1} \bar{u}_{2}) \beta^{2}-2 \bar{u}_{2}^{2}}{2 \beta^{3} \bar{u}_{1}^{2}-(\bar{u}_{1}-2 \bar{u}_{2}) \bar{u}_{1} \beta},\frac{-\bar{u}_{1}^{2}\beta^{2}+\bar{u}_{1} \bar{u}_{2}}{2 \beta^{2} \bar{u}_{1} \bar{u}_{2}-\bar{u}_{2}\left(\bar{u}_{1}-2 \bar{u}_{2}\right)}\right)\Big|\right. \\
		&\left.\beta \in \V_{\mathbb{C}}\left(T^{4}-\frac{\bar{u}_{1}-2 \bar{u}_{2} }{\bar{u}_{1}}T^{2}+\frac{\bar{u}_{2}^{2}}{\bar{u}_{1}^{2}}\right)\right\}.
	\end{align*}
	
	(ii) For all $\bar{u}=(\bar{u}_1,\bar{u}_2) \in \V_{\mathbb{C}}(u_1-4u_2)\backslash \V_{\mathbb{C}}(u_2)$,
	$$\V_{\mathbb{C}}(I(\bar{u}))=\left\{\left(\frac{1}{4\beta},-2\right)\Big|\;\beta \in \V_{\mathbb{C}}\left(T^4-\frac{1}{2}T^2+\frac{1}{16}\right)\right\},$$
	
	%(iii) For all $\bar{u} \in \V_L(u_2) \backslash \V_L(u_1)$,
	% $$\V_L(I(\bar{u}))=\left\{\left(0,-1\right)\right\}.$$
\end{example}
\begin{example}\label{ex2}
	Let $I$ be the same as above. We see the process of Algorithm 2. For briefness, we also consider the first branch:
	$$(\{0\},\{u_1u_2\},\{u_2x_2^2+u_1x_2+u_1, u_1x_1^2+u_2x_2+u_2\}).$$
	Step 2 is the same as Algorithm 1, then we get
	\begin{align*}
		B_2=&\{(E_{11},N_{11},k_{11}),(E_{12},N_{12},k_{12})\}\\
		=&\{(\{0\},\{u_1u_2(u_1-4u_2)\},4),(\{u_1-4u_2\},\{u_1u_2\},2)\}.
	\end{align*}

For branch $(\{0\},\{u_1u_2(u_1-4u_2)\},4)$. We choose $t = x_1$. Compute $\mathcal{X} = u_1^2T^4-u_1(u_1-2u_2)T^2+u_2^2$ and $\mathcal{X}' = 4u_1^2T^3-2u_1(u_1-2u_2)T$. Then we compute the parametric GCDs of $\mathcal{X}$ and $\mathcal{X}'$ over $\mathbb{C}^2\setminus\V_L(u_1u_2(u_1-4u_2))$ immediately. That is
$$\{(\{0\},\{u_1u_2(u_1-4u_2),1\})\}.$$
By the computation of parametric GCDs, we know that, in fact, $t = x_1$ is separate element for all $I(\bar{u})$ when $\bar{u} \in \mathbb{C}^2\setminus\V_L(u_1u_2(u_1-4u_2))$. Then the computations of $g_1$, $g_{11}$ and $g_{12}$ are the same as Example \ref{ex}. We can get $g_1=4 T^{3}-\frac{2(u_{1}-2 u_{2}) T}{u_{1}},$
$g_{11}=\frac{2\left(u_{1}-2 u_{2}\right)}{u_{1}}T^{2}-\frac{4 u_{2}^{2}}{u_{1}^{2}},$
$g_{12}=-\frac{2 u_{1} }{u_{2}}T^{3}+2 T.$

For branch $(\{u_1-4u_2\},\{u_1u_2\},4)$. We choose $t = x_1$. Then $\mathcal{X}$ and $\mathcal{X}'$ are the same as above. We compute the parametric GCDs of $\mathcal{X}$ and $\mathcal{X}'$ over $\V_L(u_1-4u_2)\setminus\V_L(u_1u_2)$ immediately. That is
$$\{({\{u_{1}-4 u_{2}\}},{\{u_1u_2\}},4 T^{2} u_{2}^{2}-u_{2}^{2})\}.$$
Then the computations of $g_2$, $g_{21}$ and $g_{22}$ are the same as Example \ref{ex}. We can get $g_2=4T$, $g_{21}=1,g_{22}=-8T$. Finally, the result is the same as Example \ref{ex}.
\end{example}
\begin{remark}
In Algorithm 2, we skip the computation of $\operatorname{PSC}_d(\mathcal{X},\mathcal{X}')$, so it seems to be more effective than Algorithm 1. But in fact, by the computation of $\operatorname{PSC}_d(\mathcal{X},\mathcal{X}')$, we can divide the parameter space into smaller one, and this may accelerate the computation of parametric GCD. Therefore, it is hard to tell which algorithm is better. 	
\end{remark}
\section{Implementation and Comparative Performance}

 The algorithms for computing RURs of zero-dimensional ideals with parameters have been implemented on the computer algebra system {\em Singular}. The codes and examples are available on the web: \url{http://www.mmrc.iss.ac.cn/~dwang/software.html}.

 We try all examples in \cite{Montes06} and \cite{Nabeshima07}, addition with some other examples from \cite{Jerome07}, \cite{Robbiano13} and \cite{Ayad12}. Specifically, all examples are listed in appendix.

 For all these examples, we compute the \gr basis of ideals under the block order $\prec_{U,X}$ with $U \ll X$; within each block, $\prec_U$ and $\prec_X$ are graded reverse lexicographic orders.
 The following tables show the comparison of our two algorithms in the number of branches and running time in seconds. Timings were obtained on Intel(R) Core(TM) i5-8250U CPU @ 1.60GHz 1.80 GHz with 8GB Memory running Windows 10.
	
\begin{table}[h]
	\begin{tabular}{ccccc}
		\hline
		\multirow{2}{*}{Examples}& {\small Number of 0-dim} &{\small Number of 0-dim}  &  \multirow{2}{*}{Time of Alg. 1} & \multirow{2}{*}{Time of Alg. 2} \\
		~&  {\small branches in Alg.1} &{\small branches in Alg.2 }&  ~ & ~ \\
		\hline
		$F_1$& 0 & 0 & 0.048 & 0.040   \\
		$F_2$& 4 & 4 & 0.181 & 0.158   \\
		$F_3$& 0 & 0 & 0.090 &  0.089  \\
		$F_4$& 14 & 14 & 0.718 & 0.680 \\
		$F_5$& 4 & 4 & 0.255 & 0.196  \\
		$F_6$& 9 & 9 & 0.442 & 0.433   \\
		$F_7$& 10 & - & 36.693 &  $>$1h  \\
		$F_8$& 31 & 32 & 56.298 & 55.797  \\
		$S_1$& 3 & 3 & 0.105 &  0.096 \\
		$S_2$& 1 & 1 & 0.053 &  0.048  \\
		$S_3$& 0 & 0 & 0.019 &  0.019  \\
		$S_4$& 1 & 1 & 0.045 & 0.040   \\
		$S_5$& 43 & 39 & 2.768 & 2.392   \\
		$S_6$& 4 & 4 & 0.259 & 0.215   \\
		$S_7$& 8 & 8 & 2.064 & 0.994   \\
		$S_8$& 1 & 1 & 0.068 & 0.061   \\
		$S_9$& 3 & 3 & 0.223 &  0.243  \\
		$S_{10}$& 13 & 17 & 0.625 & 0.779   \\
		$S_{12}$& 13 & 14 & 0.798 &  0.909  \\
		$S_{13}$& 4 & 4 & 0.205 & 0.204   \\
		$S_{14}$& 0 & 0 & 0.076 & 0.075   \\
		$S_{15}$& 15 & 15 & 1.044 & 1.105   \\
		$S_{16}$& 4 & 4 & 0.433 & 0.426   \\
		$C_{1}$& 30 & 30 & 3.382 & 2.646   \\
		$R_{1}$& 7 & 7 & 0.233 & 0.248   \\
		$E_{1}$& 2 & 2 & 0.075 & 0.062   \\
		$E_{2}$& 2 & 2 & 0.277 & 0.291   \\
		$E_{3}$& 4 & 4 & 0.171 & 0.155   \\
		\hline
	\end{tabular}
\end{table}
 Some remarks about the implementation issues are listed in the following:
 \begin{itemize}
 	\item To speed up, in Algorithm 1, Step 3.(c), we first factorize $\operatorname{PSC}_d(\mathcal{X}, \mathcal{X}')$. Suppose that $\operatorname{PSC}_d(\mathcal{X}, \mathcal{X}') = p_1^{r_1}\cdots p_s^{r_s}$. Then $\V_L(E_{ij})\setminus \V_L(N_{ij})$ can be divided into
 	$$\V_L(E_{ij})\setminus \V_L(N_{ij})= \V_L(E_{ij})\setminus \V_L(N_{ij}\times (p_{1}\cdots p_s)) \cup \bigcup_{j=1}^{s}\V_L(E_{ij},p_j)\setminus \V_L(N_{ij}\times (p_{1}\cdots p_{j-1})).$$
 	Let $A_{ij}:=A_{ij} \cup \{(E_{ij},N_{ij} \times (p_{1}\cdots p_s),t)\}$ and $S:=S\backslash \{t\}$. For each $\V_L(E_{ij},p_j)\setminus \V_L(N_{ij}\times (p_{1}\cdots p_{j-1}))$, if $\V_L(E_{ij},p_j)\backslash \V_L(N_{ij}\times (p_{1}\cdots p_{j-1})) \neq \emptyset$, do the step (3.b). This process may increase the number of branches, but for each branch the polynomials are simpler.
 	\item The choice of separating element may affect the performance of the algorithm significantly. How to choose a proper separating element is a complicated issue, so we do not discuss here. In our implementation, we set a variate called {\textbf{Sep$\underline{~}$ele$\underline{~}$set}}. The user can choose the separating elements manually or automatically. If the user set {\textbf{Sep$\underline{~}$ele$\underline{~}$set}} be an empty set, then the separating elements will be chosen orderly in $\{x_{1},x_{1}+x_{2}+\cdots+x_n,x_{1}+2x_{2}+\cdots+2^{n-1}x_n,x_{1}+3x_{2}+\cdots+3^{n-1}x_n,\cdots\}$. If {\textbf{Sep$\underline{~}$ele$\underline{~}$set}} is not empty, we will test the element in {\textbf{Sep$\underline{~}$ele$\underline{~}$set}} preferentially. In the examples, we set Sep$\underline{~}$ele$\underline{~}$set $= \{x_1, x_3, x_2+x_3, x_1-x_3,  x_1+x_3, x_2+x_3, x_2-x_3\}$ in $F_7$, {Sep$\underline{~}$ele$\underline{~}$set} = $\{x_4, x_1+x_4, x_3+x_4, x_1-x_4, x_3-x_4\}$ in $F_8$, Sep$\underline{~}$ele$\underline{~}$set $ = \{x_1,  x_1+2x_2, x_2+2x_3, x_1+2x_3\}$ and Sep$\underline{~}$ele$\underline{~}$set $=\{x_{2}\}$ in $S_{15}$. In other examples, the separating elements are chosen automatically.
 	\item In implementation, we also output the conditions when the system has no solution and the conditions when the system is positive dimensional. In this paper, the zero-dimensional branch does not include the branch which corresponds to a system with no solutions. If the parametric polynomial system has no zero-dimensional branches, we will output: There are no zero-dimensional branches. (For example, $F_1,F_3,S_3$ and $S_{14}$).
 	\item  In formula \ref{eq3} and \ref{eq4}, we store the values of   $\text{Tr}(M_{t^i}^I)$, $\text{Tr}(M_{x_kt^i}^I)$ after computation since they may be used in other branches.
 	\item In practice, in Step 2.(b), we compute the minimal comprehensive \gr system of the module generated by the row of $Q$ instead of introducing new variables $z_i$. It is more efficient to compute over module than introducing new variables.
 \end{itemize}

%\vh
\section{Concluding remarks}

 The rational univariate representation of zero-dimensional ideals with parameters has been considered in the paper. Because the number of zeros for zero-dimensional ideals with parameters under parametric specializations is different, the choosing of separating elements which is the premise and the key to computing the rational univariate representation is quite difficult. By means of comprehensive \gr systems to divide the parameter space, we make the ideal under each branch have the same number of zeros. Moreover, we extended the subresultant theorem to parametric cases, and based on it we proposed two ideas to choose and check the separating element corresponding to each branch, which further divides the parameter space. As a result, we obtained a finite set of which each branch shares the same expression of rational univariate representation and proposed two algorithms for computing rational univariate representations of parametric zero-dimensional ideals.

\section*{Acknowledgments}

 This research was supported by the National Natural Science Foundation of China under Grant Nos. 12171469 and 12201210, and the National Key Research and Development Project 2020YFA0712300.

\bibliographystyle{elsarticle-harv}

\bibliography{PRUR_ref}

\begin{thebibliography}{35}
\expandafter\ifx\csname natexlab\endcsname\relax\def\natexlab#1{#1}\fi
\expandafter\ifx\csname url\endcsname\relax
  \def\url#1{\texttt{#1}}\fi
\expandafter\ifx\csname urlprefix\endcsname\relax\def\urlprefix{URL }\fi

\bibitem[{Auzinger and Stetter(1988)}]{Auzinger88}
Auzinger, W., Stetter, H., 1988. An elimination algorithm for the computation
  of all zeros of a system of multivariate polynomial equations. International
  Series of Numerical Mathematics 86, 11--30.

\bibitem[{Ayad et~al.(2012)Ayad, Fares, and Ayyad}]{Ayad12}
Ayad, A., Fares, A., Ayyad, Y., 2012. An algorithm for solving zero-dimensional
  parametric systems of polynomial homogeneous equations. Journal of Nonlinear
  Sciences and Applications 5~(6, Special issue), 426--438.

\bibitem[{Becker and Weispfenning(1993)}]{Becker93}
Becker, T., Weispfenning, V., 1993. Gr{\"o}bner Bases. Springer-Verlag.

\bibitem[{Cox et~al.(2005)Cox, Little, and O'shea}]{Cox05}
Cox, D., Little, J., O'shea, D., 2005. Using Algebraic Geometry. Springer.

\bibitem[{Cox et~al.(2006)Cox, Little, and O'shea}]{Cox06}
Cox, D.~A., Little, J., O'shea, D., 2006. Using algebraic geometry. Vol. 185.
  Springer Science \& Business Media.

\bibitem[{Kalkbrener(1997)}]{Kalkbrener97}
Kalkbrener, M., 1997. On the stability of gr\"{o}bner bases under
  specializations. Journal of Symbolic Computation 24~(1), 51--58.

\bibitem[{Kapur et~al.(2018)Kapur, Lu, Monagan, Sun, and Wang}]{Kapur18}
Kapur, D., Lu, D., Monagan, M., Sun, Y., Wang, D., 2018. An efficient algorithm
  for computing parametric multivariate polynomial gcd. In: Proceedings of the
  2018 International Symposium on Symbolic and Algebraic Computation. pp.
  239--246.

\bibitem[{Kapur et~al.(2013)Kapur, Sun, and Wang}]{Kapur13}
Kapur, D., Sun, Y., Wang, D., 2013. An efficient algorithm for computing a
  comprehensive gr{\"o}bner system of a parametric polynomial systems. Journal
  of Symbolic Computation 49, 27--44.

\bibitem[{Lebrun(2011)}]{Jerome07}
Lebrun, J., 2011. Normal forms in statistical signal processing. De Gruyter,
  Berlin, Boston, pp. 107--126.

\bibitem[{Ma et~al.(2012)Ma, Sun, and Wang}]{Ma12}
Ma, X., Sun, Y., Wang, D., 2012. Computing polynomial univariate
  representations of zero-dimensional ideals by gr{\"o}bner basis. Science
  China Mathematics 55~(6), 1293--1302.

\bibitem[{Ma et~al.(2017)Ma, Sun, Wang, and Xue}]{Ma17}
Ma, X., Sun, Y., Wang, D., Xue, Y., 2017. On checking linear dependence of
  parametric vectors. In: Intelligent Computing Theories and Application. pp.
  188--196.

\bibitem[{Manubens and Montes(2006)}]{Montes06}
Manubens, M., Montes, A., 2006. Improving the {DISPGB} algorithm using the
  discriminant ideal. Journal of Symbolic Computation 41~(11), 1245--1263.

\bibitem[{Mishra(1993)}]{MB1993}
Mishra, B., 1993. Algorithmic algebra. Springer-Verlag, New York.

\bibitem[{Montes(2002)}]{Montes02}
Montes, A., 2002. A new algorithm for discussing gr\"{o}bner basis with
  parameters. Journal of Symbolic Computation 33, 183--208.

\bibitem[{Nabeshima(2007)}]{Nabeshima07}
Nabeshima, K., 2007. A speed-up of the algorithm for computing comprehensive
  gr\"{o}bner systems. In: Proceedings of the 2007 International Symposium on
  Symbolic and Algebraic Computation. pp. 299--306.

\bibitem[{Nagasaka(2017)}]{Nagasaka17}
Nagasaka, K., 2017. Parametric greatest common divisors using comprehensive
  gr\"{o}bner systems. In: Proceedings of the 2017 International Symposium on
  Symbolic and Algebraic Computation. pp. 341--348.

\bibitem[{Noro and Yokoyama(1999)}]{Noro99}
Noro, M., Yokoyama, K., 1999. A modular method to compute the rational
  univariate representation of zero-dimensional ideals. Journal of Symbolic
  Computation 28~(1-2), 243--263.

\bibitem[{Ouchi and Keyser(2008)}]{Ouchi08}
Ouchi, K., Keyser, J., 2008. Rational univariate reduction via toric
  resultants. Journal of Symbolic Computation 43~(11), 811--844.

\bibitem[{Robbiano and Torrente(2013)}]{Robbiano13}
Robbiano, L., Torrente, M., 2013. Zero-dimensional families of polynomial
  systems. Matematiche (Catania) 68~(1), 137--164.

\bibitem[{Rouillier(1999)}]{Rouillier99}
Rouillier, F., 1999. Solving zero-dimensional systems through the rational
  univariate representation. Applicable Algebra in Engineering Communication
  and Computing 9~(5), 433--461.

\bibitem[{Safey El~Din et~al.(2018)Safey El~Din, Yang, and Zhi}]{Mohab18}
Safey El~Din, M., Yang, Z., Zhi, L., 2018. On the complexity of computing real
  radicals of polynomial systems. In: Proceedings of the 2018 International
  Symposium on Symbolic and Algebraic Computation. pp. 351--358.

\bibitem[{Schost(2003)}]{Schost03}
Schost, E., 2003. Computing parametric geometric resolutions. Applicable
  Algebra in Engineering Communication and Computing 13~(5), 349–393.

\bibitem[{Shang et~al.(2017)Shang, Zhang, Tan, and Xia}]{Shang17}
Shang, B., Zhang, S., Tan, C., Xia, P., 2017. A simplified rational
  representation for positive-dimensional polynomial systems and shepwm
  equation solving. Journal of Systems Science and Complexity 30, 1470--1482.

\bibitem[{Stetter(1996)}]{Stetter96}
Stetter, H., 1996. Matrix eigenproblem are at the heart of polynomial system
  solving. ACM SIGSAM Bull. 30, 27--36.

\bibitem[{Suzuki and Sato(2006)}]{Suzuki06}
Suzuki, A., Sato, Y., 2006. A simple algorithm to compute comprehensive
  gr\"{o}bner bases using gr\"{o}bner bases. In: Proceedings of the 2006
  International Symposium on Symbolic and Algebraic Computation. pp. 326--331.

\bibitem[{Tan(2009)}]{Tan09}
Tan, C., 2009. The rational representation for solving polynomial systems (in
  chinese). Ph.D. thesis.

\bibitem[{Tan and Zhang(2009)}]{Tan2009}
Tan, C., Zhang, S., 2009. Separating element computation for the rational
  univariate representation with short coefficients in zero-dimensional
  algebraic varieties. Journal of Jilin University (Science Edition) 47,
  174--178.

\bibitem[{Tan and Zhang(2010)}]{Tan10}
Tan, C., Zhang, S., 2010. Computation of the rational representation for
  solutions of high-dimensional systems. Communications in Mathematical
  Research 26~(2), 119--130.

\bibitem[{Wang et~al.(2022{\natexlab{a}})Wang, Wang, Wei, and Xiao}]{Wang2022}
Wang, D., Wang, H., Wei, J., Xiao, F., 2022{\natexlab{a}}. An extended gcrd
  algorithm for parametric univariate polynomial matrices and application to
  parametric smith form. Journal of Symbolic Computation.

\bibitem[{Wang et~al.(2020)Wang, Wang, and Xiao}]{Wang20}
Wang, D., Wang, H., Xiao, F., 2020. An extended gcd algorithm for parametric
  univariate polynomials and application to parametric smith normal form. In:
  Proceedings of the 45th International Symposium on Symbolic and Algebraic
  Computation. pp. 442--449.

\bibitem[{Wang et~al.(2022{\natexlab{b}})Wang, Wei, Xiao, and Zheng}]{Wang22}
Wang, D., Wei, J., Xiao, F., Zheng, X., 2022{\natexlab{b}}. Rational univariate
  representation of zero-dimensional ideals with parameters. In: Proceedings of
  the 2022 International Symposium on Symbolic and Algebraic Computation. pp.
  217--224.

\bibitem[{Weispfenning(1992)}]{Weispfenning92}
Weispfenning, V., 1992. Comprehensive gr\"{o}bner bases. Journal of Symbolic
  Computation 14~(1), 1--29.

\bibitem[{Wu(1984)}]{Wu84}
Wu, W., 1984. Basic Principles of Mechanical Theorem Proving in Geometries (in
  Chinese), Vol. I:Part of Elementary Geometries. Science Press.

\bibitem[{Xiao et~al.(2021)Xiao, Lu, Ma, and Wang}]{Xiao2021}
Xiao, F., Lu, D., Ma, X., Wang, D., 2021. An improvement of the rational
  representation for high-dimensional systems. Journal of Systems Science and
  Complexity 34~(6), 2410--2427.

\bibitem[{Zeng and Xiao(2010)}]{Zeng10}
Zeng, G., Xiao, S., 2010. Computing the rational univariate representations for
  zero-dimensional systems by wu's method. Science China Mathematics(Chinese)
  40~(10), 999--1016.

\end{thebibliography}
\section*{Appendix: Examples.}
	\noindent 1. Examples in \cite{Nabeshima07}:

	$$\begin{aligned}
		F_{1}&=\left\{u_{1} x_{1}^{4} x_2+x_1 x_2^{2}+u_2 x_1, x_1^{3}+2 x_1 x_2, u_2 x_1^{2}+x_1^{2} x_2\right\}. \\
		F_{2}&=\left\{u_{1} x_1^{2} x_2^{3}+u_2 x_2+x_2, x_1^{2} x_2^{2}+x_1 x_2+2, u_{1} x_1^{2}+u_2 x_2+2\right\}. \\
		F_{3}&=\left\{u_{1} x_1^{4}+u_3 x_1^{2}+u_2, u_2 x_1^{3}+x_1^{2}+2, u_3 x_1^{2}+u_4 x_1\right\}. \\
		F_{4}&=\left\{u_{1} x_1^{3} x_2+u_3 x_1 x_2^{2}, x_1^{4} x_2+3 u_4 x_2, u_3 x_1^{2}+u_2 x_1 x_2, x_1^{2} x_2^{2}+u_{1} x_1^{2}, x_1^{5}+x_2^{5}\right\}\\
		F_{5}&=\left\{u_{1} x_1^{2} x_2+u_2 x_1+x_2^{3}, u_{1} x_1^{2} x_2+u_2 x_1 x_2, x_2^{2}+u_2 x_1^{2} x_2+u_3 x_1 x_2\right\}. \\
		F_{6}&=\left\{x_1^{4}+u_{1} x_1^{3}+u_2 x_1^{2}+u_3 x_1+u_4, 4 x_1^{3}+3 u_{1} x_1^{2}+2 u_2 x_1+u_3\right\}. \\
		F_{7}&=\left\{x_1^{3}-u_{1}, x_2^{4}-u_2, x_1+x_2-u_{1} x_3\right\}. \\
		F_{8}&=\left\{u_{1} x_1^{2}+u_2 x_3, u_3 x_4^{2}+x_2,(x_1-x_2)^{2}+(x_3-x_4)^{2}, 2 u_4 x_1 x_4-2 u_2 x_3\right\}.
	\end{aligned}$$

	\vspace{0.2cm}
	
	\noindent 2. Examples in \cite{Montes06}:
	$$\begin{aligned}
		S_1&=\left\{u_1 (x_1+x_2), u_2 (x_1+x_2), x_1^2+u_1 x_1\right\}.\\
		S_2&=\left\{x_1^2, x_1 x_2,x_1 x_3^2, u_1 x_1+x_2, x_2 x_3-x_3^2,u_1 x_2, x_3^3,u_1 x_3^2, u_1^2\right\}.\\
		S_3&=\left\{x_1^3-u_1 x_1 x_2, x_1^2 x_2-2 x_2^2+x_1\right\}.\\
		S_4&=\{u_1 x_1+x_2-1, u_2 x_1+x_2-2, 2 x_1+u_1 x_2, u_2 x_1+u_1 x_2+1\}.\\
		S_5&=\{x_4-(u_4-u_2), x_1+x_2+x_3+x_4-(u_1+u_3+u_4), x_1 x_3 x_4-u_1 u_3 u_4, \\
		&\quad \; \;x_1 x_3+x_1 x_4+x_2 x_3+x_3 x_4-(u_1 u_4+u_1 u_3+u_3 u_4)\}.\\
		S_6&=\left\{u_2 x_1 x_2+u_1 x_1^2+x_1,u_1 x_2^2+x_1^2\right\}.\\
		S_7&=\left\{x_2^2-x_1 x_2 x_3+x_1^2+x_3-1,x_1 x_2+x_3^2-1,x_1^2+x_2^2+x_3^2-u_1^2\right\}.\\
		S_8 &= \left\{u_1-u_2+(u_1x_1 x_2 -u_2x_1^{2} x_2-3u_1)^{3}+(x_1 x_2 u_2-3 x_1 u_2-5 u_2)^{4},\right.\\
		& \quad \; \; \left. u_1x_1 x_2 -u_2x_1^{2} x_2 -3 u_1,u_2x_1 x_2 -3 u_2x_1 -5 u_2\right\}.\\
		S_9 &= \left\{x_1+u_3 x_2+u_2 x_3+u_1, u_3 x_1+x_2+u_1 x_3+u_2, u_2 x_1+u_1 x_2+x_3+u_3\right\}.\\
		S_{10} &= \left\{x_1^2+x_3^2-1, x_2^2+x_4^2-1, u_3 (x_1 x_2-x_3 x_4)-x_3+u_1,u_3 (x_1 x_4+x_3 x_2)+x_1-u_2\right\}.\\
		S_{12} &=\left\{u_1-u_3 x_2-u_4 x_1,u_2-u_3 x_4-u_4 x_3  ,x_1^2+x_3^2-1,x_2^2+x_4^2-1\right\}.\\
		S_{13} &=\left\{u_1 x_1^2 x_2+u_1 3 u_2^2,u_1 (u_2-u_3) x_1 x_2+u_1 u_2 x_1+5 u_3\right\}.\\
		S_{14} &=\left\{ x_4^3-u_1 x_1 x_4^2-x_1 x_2^2-x_1 x_3^2, x_4^3-u_1 x_2 x_4^2-x_2 x_1^2-x_2 x_4^2, x_4^3-u_1 x_3 x_4^2-x_3 x_1^2-x_3 x_2^2\right\}.\\
		S_{15} &=\left\{u_1+u_4 x_1, u_2-u_4 x_4, u_5 x_5+u_6 x_6-u_4,u_5 x_2+u_6 x_3-u_3,x_1^2+x_4^2-1,x_2^2+x_5^2-1,x_3^2+x_6^2-1\right\}.\\
		S_{16} &= \text{See Section 5.2 in \cite{Montes06}}.\\
	\end{aligned}$$
	
	\vspace{0.2cm}
	
	\noindent 3. Other examples: $C_1$ is from \text{\cite{Jerome07}}, $R_1$ is from \cite{Ayad12}, $E_1,E_2,E_3$ are from \cite{Robbiano13}.
	$$\begin{aligned}
		C_1&=\left\{u_1-3 x_1^2-4 x_1 x_2-2 x_2^2-2 x_1 x_3-2 x_2 x_3-x_3^2,u_2 - 2 x_1^2-3 x_1 x_2-x_1 x_3-x_2^2-x_2 x_3,u_3-x_1^2-x_1 x_2-x_1 x_3\right\}.\\
		R_1 &= \left\{x_1+x_2-u_1,x_3+x_4-u_2, x_1^2+x_3^2-1,x_2^2+x_4^2-1\right\}.\\
		E_1&= \left\{x_1^2+u_1 x_2^2-1, x_2^2+u_2 x_2, -2 u_1 u_2 x_2+4 x_1 x_2\right\}.\\
		E_2&=\left\{u_1 x_1 x_2+u_2,u_3 x_1^2+u_4 x_2^2+u_5,-2 u_1 u_3 x_1^2+2 u_1 u_4 x_2^2\right\}.\\
		E_3&= \left\{x_1 x_2+u_1 x_1+1, x_1^2+x_2^2+u_2,-2 x_1^2+2 x_2^2+2 u_1 x_2\right\}.
	\end{aligned}$$
\end{document}